\newtheorem{theorem}{Theorem}[section]
\newtheorem{proposition}[theorem]{Proposition}
\newtheorem{corollary}[theorem]{Corollary}
\newtheorem{definition}{Definition}
\newenvironment{remark}[1][Remark]{\begin{trivlist}
\item[\hskip \labelsep {\bfseries #1}]}{\end{trivlist}}
\begin{document}

\title{Bracket-Preserving property of Anchor Maps and Applications to Various Brackets}
\author{S. Srinivas Rau}
\email{srinivasrau@ifheindia.org}
\affiliation{Department of Mathematics, Faculty of Science and Technology, The ICFAI Foundation for Higher Education, Dontanapally, Hyderabad, 501 203.}
\author{T. Shreecharan}
\email{shreecharan@ifheindia.org}
\affiliation{Department of Physics, Faculty of Science and Technology, The ICFAI Foundation for Higher Education, Dontanapally, Hyderabad, 501 203.}
\begin{abstract}
\begin{center}
\small{Abstract}
\end{center}
Let $E \rightarrow M$ be a smooth vector bundle with a bilinear product on $\Gamma(E)$ satisfying the Jacobi identity. Assuming only the existence of an anchor map $\mathfrak{a}$ we show that $\mathfrak{a}([X,Y]) = [\mathfrak{a}X,\mathfrak{a}Y]_c$. This gives the redundancy of the homomorphism condition in the definition of Leibniz algebroid (in particular if it arises from a Nambu-Poisson manifold); an aspect not addressed in the literature. We apply our result to the brackets of Hagiwara, Ibanez et. al; we settle an old query of Uchino on redundancy for Courant bracket.
\end{abstract}
\maketitle

\section{Introduction}

The sucessful geometrisation of classical mechanics \cite{Arnold,Abraham} has encouraged researchers to formulate mechanics of more generality than that of Lagrangian and Hamiltonian mechanics. This  has had an impact on the understanding of physical systems that were previously beyond the scope of traditional mechanics. One of the earliest attempts of generalisation of mechanics has been via the use of Lie algebroids \cite{Weinstein,Liberman}. This formulation was further extended to that of mechanics on almost Lie algebroids and almost Leibniz algebroids. This was achieved by dropping the Jacobi identity in the definition of Lie algebroid leading to almost Lie algebroid \cite{Marrero} and further dropping the requirement of skew-symmetry leading to almost Leibniz algebroid structure \cite{Balseiro}. Another line of construction of mechanics is the one advocated in Ortega and Planas-Bielsa \cite{Ortega} which is based on the Leibniz bracket as introduced by Grabowski and Urbanski \cite{Grabowski2}. It must be mentioned here that this strucure is different from the concept of Leibniz algebroid that is a natural generalisation of a Lie algebroid obtained by discarding the skew-symmetric condition as introduced by Loday \cite{revgrab}.

Recently Lie algebroids have been used to formulate more general gauge theories than Yang-Mills \cite{Strobl2}. This approach has yielded rich dividends. Poisson sigma model \cite{Schaller,Ikeda} a prototype of a Lie algebroid gauge theory has provided a field theoretic insight into the deformation quantisation scheme of Kontsevich. It has also shown some promising glimpses of uniting gravity and gauge theory in a common framework \cite{Strobl1} atleast only in two dimensions as of now.

Our contribution is to give a simple computational proof that a commonly used axiom for Lie/Leibniz/Courant algebroids is redundant: if an anchor map exists and if the bracket satisfies the Jacobi/Leibniz identity then the anchor preserves brackets. This perhaps simplifies the task for the physicist since many proofs involve the bracket-preserving condition. In view of our proposition (\ref{prop1}), the bracket preserving condition is equivalent to the Jacobi identity for Leibniz and Lie algebroids. For Courant algebroids, Prop. (\ref{prop1}) answers a question of Uchino \cite{Uchino}.

We point out that the Hagiwara and Courant brackets arise from Dirac's theory of constraints. Our treatment below avoids the heavy background of the original authors. An interesting new bracket on $(p-1)$ forms is introduced and studied.

\section{Bracket preserving property of anchor}

\noindent Suppose $E \rightarrow M$ is a smooth vector bundle. Let $[\, , \,]$ be a bilinear bracket on the vector space of smooth sections $\Gamma(E)$. Note that $\Gamma(E)$ is a faithful module over the ring $\mathcal{C}^\infty(M)$. \textit{We assume}
\begin{enumerate}

\item $[X,[Y,Z]] = [[X,Y],Z] + [[Y,[X,Z]] \quad \forall \quad X,Y,Z \in \Gamma(E)$.

\item Let $\mathcal{T}(\mathcal{C}^\infty(M), \mathcal{C}^\infty(M))$ be the set of transformations (self-mappings) of $\mathcal{C}^\infty(M)$. Suppose there is a map $\mathfrak{a}: \Gamma(E) \rightarrow \mathcal{T}$ such that $\forall \, f \in \mathcal{C}^\infty(M)$ and $X, Y \in \Gamma(E)$ one has \\
    $(\mathfrak{a}(X)f)Y = [X,fY] - f[X,Y]$ \\
Since $\Gamma(E)$ is faithful, $(\mathfrak{a}(X)f)Y = 0 \quad \forall \quad Y$ iff $(\mathfrak{a}(X)f) =0 \, \in \, \mathcal{C}^\infty(M)$

\end{enumerate}

\noindent Then
\begin{proposition} \label{prop1}
\begin{enumerate}

\item $\mathfrak{a}([X,Y]) = [\mathfrak{a}(X),\mathfrak{a}(Y)]_c$ where $[ \, , \,]_c$ is the commutator in $\mathcal{T}: [T, S]_c \, g = T(Sg) - S(Tg)$ $\forall \, g \in \mathcal{C}^\infty(M), \, T,S \, \in \mathcal{T}$.

\item If $\mathfrak{a}$ is a linear map then $\mathfrak{a}([X,Y]) = - \mathfrak{a}([Y,X])$.

\item Each transformation $\mathfrak{a}(X)$ satisfies $\mathfrak{a}(X)(fg) = f(\mathfrak{a}(X)g) + (\mathfrak{a}(X)f)g$ $\forall \, f, g \in \mathcal{C}^\infty(M)$. Thus if $\mathfrak{a}(X)$ is linear then it is a derivation of $\mathcal{C}^\infty(M)$.

\end{enumerate}
\end{proposition}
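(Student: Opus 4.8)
The plan is to exploit the faithfulness of $\Gamma(E)$ over $\mathcal{C}^\infty(M)$ so that every claimed equality of functions (or of transformations) is verified by multiplying through by an arbitrary section and then cancelling that section. The one computational tool is assumption 2 rewritten as $[X,hZ] = (\mathfrak{a}(X)h)Z + h[X,Z]$ for $h \in \mathcal{C}^\infty(M)$; I shall apply it repeatedly, including to coefficients of the form $h = \mathfrak{a}(Y)f$, which are themselves bona fide smooth functions.

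For part (1) I would compute the single expression $[X,[Y,fZ]]$ in two ways and compare. In the first way I apply the Jacobi identity (assumption 1) to split it as $[[X,Y],fZ] + [Y,[X,fZ]]$ and then expand every bracket by the Leibniz relation; this produces $(\mathfrak{a}([X,Y])f)Z$, the mixed term $(\mathfrak{a}(Y)(\mathfrak{a}(X)f))Z$, the cross terms $(\mathfrak{a}(X)f)[Y,Z]$ and $(\mathfrak{a}(Y)f)[X,Z]$, and the $f$-weighted brackets $f[[X,Y],Z] + f[Y,[X,Z]]$. In the second way I first rewrite the inner bracket $[Y,fZ] = (\mathfrak{a}(Y)f)Z + f[Y,Z]$ and then expand $[X,\cdot]$ again by Leibniz; this produces $(\mathfrak{a}(X)(\mathfrak{a}(Y)f))Z$, the same two cross terms, and $f[X,[Y,Z]]$. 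Equating the two and cancelling the common cross terms leaves
\[
(\mathfrak{a}(X)(\mathfrak{a}(Y)f))Z + f[X,[Y,Z]] = (\mathfrak{a}([X,Y])f)Z + (\mathfrak{a}(Y)(\mathfrak{a}(X)f))Z + f[[X,Y],Z] + f[Y,[X,Z]].
\]
The Jacobi identity annihilates the three $f$-weighted triple brackets, and what survives is $(\mathfrak{a}([X,Y])f)Z = ([\mathfrak{a}(X),\mathfrak{a}(Y)]_c f)Z$ for every $Z$; faithfulness strips $Z$, and since $f$ was arbitrary we obtain $\mathfrak{a}([X,Y]) = [\mathfrak{a}(X),\mathfrak{a}(Y)]_c$. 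Note that only bilinearity of the bracket is used here, not linearity of $\mathfrak{a}$.

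Part (2) is then immediate: applying part (1) with $X$ and $Y$ interchanged gives $\mathfrak{a}([Y,X]) = [\mathfrak{a}(Y),\mathfrak{a}(X)]_c$, and the operator commutator is manifestly antisymmetric, $[\mathfrak{a}(Y),\mathfrak{a}(X)]_c = -[\mathfrak{a}(X),\mathfrak{a}(Y)]_c$, whence $\mathfrak{a}([X,Y]) = -\mathfrak{a}([Y,X])$; linearity of $\mathfrak{a}$ then lets one read the right-hand side as $\mathfrak{a}(-[Y,X])$ and conclude that the anchor kills the symmetric part of the bracket. For part (3) I would run the same two-way expansion one level lower: compute $[X,(fg)Z]$ directly by Leibniz with the product coefficient $fg$, and also by writing $(fg)Z = f(gZ)$ and applying Leibniz twice. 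After cancelling the common term $(fg)[X,Z]$ and stripping $Z$ by faithfulness, one is left with $\mathfrak{a}(X)(fg) = (\mathfrak{a}(X)f)g + f(\mathfrak{a}(X)g)$, the Leibniz product rule; this step uses only assumption 2 and not the Jacobi identity, and the derivation property follows once $\mathfrak{a}(X)$ is additionally assumed linear.

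The calculation has no genuine obstruction — it is entirely bookkeeping — so the one point demanding care is organizing the two expansions in part (1) so that the cross terms $(\mathfrak{a}(X)f)[Y,Z]$ and $(\mathfrak{a}(Y)f)[X,Z]$ appear identically on both sides and cancel, leaving exactly the triple-bracket combination that the Jacobi identity removes. I would also be careful to justify each application of the Leibniz relation to a coefficient of the form $\mathfrak{a}(Y)f$ by noting that it lies in $\mathcal{C}^\infty(M)$, and to invoke faithfulness only after each identity has been reduced to an equality of sections valid for all $Z$.
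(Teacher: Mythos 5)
Your proof is correct. For parts (1) and (2) it is essentially the paper's own argument: the paper likewise reduces everything to the identity $(\mathfrak{a}([X,Y])f)Z = ([\mathfrak{a}(X),\mathfrak{a}(Y)]_c\, f)Z$, expands both sides via assumption 2 (applied also to the coefficients $\mathfrak{a}(X)f,\ \mathfrak{a}(Y)f \in \mathcal{C}^\infty(M)$) and the Jacobi identity, cancels the cross terms $(\mathfrak{a}(X)f)[Y,Z]$ and $(\mathfrak{a}(Y)f)[X,Z]$, and strips $Z$ by faithfulness; your reorganization as two expansions of the single expression $[X,[Y,fZ]]$ is only a cosmetic difference, and part (2) is handled identically (antisymmetry of the operator commutator, with linearity of $\mathfrak{a}$ used only to rewrite $-\mathfrak{a}([Y,X])$ as $\mathfrak{a}(-[Y,X])$). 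The one genuine divergence is part (3): the paper does not compute it but instead appeals to Skryabin's theorem as presented by Grabowski, whereas you derive the Leibniz rule directly by expanding $[X,(fg)Z]$ once with coefficient $fg$ and once as $[X,f(gZ)]$, cancelling the common term $(fg)[X,Z]$, and invoking faithfulness. Your version is self-contained, uses only assumption 2 together with the module axiom $(fg)Z = f(gZ)$, and makes transparent that the Jacobi identity plays no role in part (3); this is arguably an improvement over the citation.
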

\begin{proof}
\begin{enumerate}
\item To show that $\mathfrak{a}$ preserves brackets, we choose and fix $f \in \mathcal{C}^\infty(M)$ and $Z \in \Gamma(E)$. Now we claim that $(\mathfrak{a}([X,Y])f)Z = ([\mathfrak{a}(X),\mathfrak{a}(Y)]_c \, f)Z$. By the faithfulness of the module $\Gamma(E)$ we have equality of the brackets for arbitrary $f$. \\
Now the LHS is
\begin{eqnarray*}
(\mathfrak{a}([X,Y])f)Z & = & [[X,Y],fZ] - f [[X,Y],Z] \\
& = & [[X,Y],fZ] - f \big\{[X,[Y,Z]] - [Y,[X,Z]]\big\} \\
& = & [[X,Y],fZ] - f [X,[Y,Z]] + f [Y,[X,Z]]
\end{eqnarray*}
The RHS is
\begin{equation*}
([\mathfrak{a}(X),\mathfrak{a}(Y)]_c \, f)Z = \Big(\mathfrak{a}(X)\big\{\mathfrak{a}(Y)f \big\} - \mathfrak{a}(Y)\big\{\mathfrak{a}(X)f \big\} \Big) Z
\end{equation*}
Setting $\mathfrak{a}(Y)f = g$ and $\mathfrak{a}(X)f = h$ leads to
\begin{eqnarray*}
(\mathfrak{a}([X,Y])f)Z & = & \big(\mathfrak{a}(X)g - \mathfrak{a}(Y)h \big) \\
& = & \big\{[X,gZ] - g[X,Z] \big\} - \{[Y,hZ] - h[Y,Z]\} \\
& = & \big\{[X,\mathfrak{a}(Y)f Z] - \mathfrak{a}(Y)f [X,Z] \big\} - \{[Y,\mathfrak{a}(X)f Z] - \mathfrak{a}(X)f [Y,Z]\} \\
& = & [X,[Y,fZ] -f[Y,Z]] - \big\{[Y,f[X,Z]] - f[Y,[X,Z]] \big\} \\
& - & [Y,[X,f Z]-f[X,Z]] + [X,f[Y,Z]-f[X,[Y,Z]] \\
& = & [X,[Y,fZ]] -\cancel{[X,f[Y,Z]]} - \bcancel{[Y,f[X,Z]]} + f[Y,[X,Z]] \\
& - & [Y,[X,f Z]] + \bcancel{[Y,f[X,Z]]} + \cancel{[X,f[Y,Z]]} - f[X,[Y,Z]] \\
& = & [[X,Y],fZ] + f[Y,[X,Z]] - f[X,[Y,Z]] = \mathrm{LHS} \qed
\end{eqnarray*}
Note that in obtaining the last step we have made use of our assumption (i)

\item If $\mathfrak{a}$ is linear, in particular $\mathfrak{a}(-Z) = - \mathfrak{a}(Z)$ for any $Z$. Therefore
\begin{eqnarray*}
\mathfrak{a}([X,Y]) & = & [\mathfrak{a}(X),\mathfrak{a}(Y)]_c \\
& = & - [\mathfrak{a}(Y),\mathfrak{a}(X)]_c \quad (\mathrm{from \, the \, definition \, of} \, [ \, , \,]_c) \\
& = & - \mathfrak{a}([Y,X]) \qed
\end{eqnarray*}
\item We note that the proof of Skryabin's theorem (Prop. 1.1) \cite{Skryabin}  holds for any $\mathfrak{a}(X)$, which is denoted by $\hat{D}$ by Grabowski (Thm. 1, pg 2) \cite{Grabowski}. This gives the Leibniz property: \\
    $\mathfrak{a}(X)(fg) = f(\mathfrak{a}(X)g + (\mathfrak{a}(X)f)g$ \qed \\
By definition a derivation is a linear map on $\mathcal{C}^\infty(M)$ satisfying the Leibniz property \cite{Koszul}.
\end{enumerate}

\end{proof}

\section{Leibniz Algebroid}

The usual definition of Leibniz algebroid is \cite{Ibanez}:
\begin{definition} \label{leibnizdef}
A Leibniz algebra structure on a real vector space $\mathfrak{g}$ is a $\mathbb{R}$-bilinear map $[[\, , \,]]:\mathfrak{g} \times \mathfrak{g} \rightarrow \mathfrak{g}$ satisfying the Leibniz identity
\begin{equation*}
[[a_1, [[a_2,a_3]] \ ]] - [[ \ [[a_1,a_2]], a_3]] - [[a_2, [[a_1,a_3]] \ ]] = 0 \quad \mathrm{for} \quad a_1,a_2,a_3 \in \mathfrak{g}
\end{equation*}
\end{definition}
\begin{definition} \label{leibalg}
A Leibniz algebroid structure on a differentiable vector bundle $E \rightarrow M$ is a pair that consists of a Leibniz algebra structure $[[\, , \,]]$ on the space $\Gamma(E)$ of the global cross sections of $E \rightarrow M$ and a vector bundle morphism $\varrho: E \rightarrow TM$, called the anchor map, such that the induced map $\varrho: \Gamma(E) \rightarrow \Gamma(TM) = X(M)$ satisfies the following relations:
\begin{enumerate}
\item \textit{$\varrho[[s_1,s_2]] = [\varrho(s_1),\varrho(s_2)]$}
\item \textit{$[[s_1,fs_2]] = f[[s_1,s_2]] + \varrho(s_1)(f)s_2$}
\end{enumerate}
$\forall \, s_1, s_2 \in \Gamma(E)$ and $f \in \mathcal{C}^\infty(M)$. \\
A triple $(E, [[\, , \,]], \varrho)$ is called a Leibniz algebroid over the manifold $M$.
\end{definition}
Leibniz algebroid has been associated with Nambu-Poisson manifold. In fact this association is very interesting since it has been shown that Nambu-Poisson manifold has atleast two different Leibniz algebroid structures. The first one being derived in Ref \cite{Ibanez} and the other in Ref \cite{Hagiwara}. This is not only interesting from a mathematical point of view but physically also it throws up intriguing questions. We reproduce some of the definitions and also the two distinct Leibniz algebroid structures here, for the sake of convenience of the readers.
\begin{definition}
Let $M$ be a smooth $n$-dimensional manifold. A Nambu-Poisson structure on $M$ of order $p$ (with $2\leq p\leq n$) is given by a $p$-vector field which satisfies the fundamental identity.
\end{definition}
\begin{definition}
Let $M$ be a Nambu-Poisson manifold. A Leibniz algebroid attached to $M$ is the triple $(\bigwedge^{p-1}(T^\ast M), [[\, , \,]], \Pi)$, where $[[\, , \,]]: \Omega^{p-1}(M) \times \Omega^{p-1}(M) \rightarrow \Omega^{p-1}(M)$ is the bracket of $(p-1)$ forms, as defined by Iba\~{n}ez et. al. \cite{Ibanez}
\begin{equation} \label{ibala}
[[\alpha,\beta]] = \mathcal{L}_{\Pi(\alpha)} \beta + (-1)^p (\Pi(d\alpha))\beta
\end{equation}
or, as defined by Hagiwara \cite{Hagiwara}
\begin{equation} \label{hagla}
[[\alpha,\beta]] = \mathcal{L}_{\Pi(\alpha)} \beta - \imath_{\Pi(\beta)} d\alpha
\end{equation}
for $\alpha, \beta \in \Omega^{p-1}(M)$ and $\Pi:\bigwedge^{p-1}(T^\ast M) \rightarrow TM$ if the homomorphism of the vector bundles given by $\Pi(\beta)=i(\beta)\Lambda(x)$; $\Lambda$ being the Nambu-Poisson $p$-vector, $\mathcal{L}$ the Lie derivative, and $\imath$ the interior product.
\end{definition}

The redundancy of the homomorphism condition in the definition of Leibniz algebroid can be written as follows, in view of Proposition \ref{prop1} (1)
\begin{corollary}
In the definition of a Leibniz algebroid the bracket-preserving condition (Cond (1) of Def \ref{leibnizdef}) on the anchor map $\varrho$ is redundant. In particular this redundancy holds for Leibniz algebroids arising from Nambu-Poisson manifolds.
\end{corollary}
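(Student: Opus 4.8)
The plan is to observe that the Corollary is an immediate consequence of Proposition \ref{prop1}(1), once the standing hypotheses of Section 2 are matched against the data packaged in Definition \ref{leibalg}. Concretely, I would set $\mathfrak{a} = \varrho$ and verify the two assumptions of Section 2 directly. The Leibniz identity of Definition \ref{leibnizdef} is \emph{verbatim} assumption (i) (with $[[\,,\,]]$ in place of $[\,,\,]$). For assumption (ii), Condition (2) of Definition \ref{leibalg},
\[
[[s_1, f s_2]] = f[[s_1,s_2]] + \varrho(s_1)(f)\, s_2,
\]
rearranges to $\varrho(s_1)(f)\, s_2 = [[s_1, f s_2]] - f[[s_1,s_2]]$, which is exactly $(\mathfrak{a}(X)f)Y = [X,fY] - f[X,Y]$ upon reading $X = s_1$, $Y = s_2$, and interpreting $\mathfrak{a}(X)f = \varrho(s_1)(f) \in \mathcal{C}^\infty(M)$ as multiplication by an element of $\mathcal{T}$. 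Faithfulness of $\Gamma(E)$ over $\mathcal{C}^\infty(M)$ guarantees that this determines $\varrho(s_1)(f)$ unambiguously, so $\varrho$ genuinely supplies the anchor $\mathfrak{a}\colon \Gamma(E)\to\mathcal{T}$ demanded by the Proposition.

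With both assumptions in force, Proposition \ref{prop1}(1) applies and yields
\[
\varrho([[s_1,s_2]]) = [\varrho(s_1),\varrho(s_2)]_c,
\]
where $[\,,\,]_c$ is the commutator inside $\mathcal{T}$. The one remaining step is to identify the right-hand side with the Lie bracket of vector fields appearing in Condition (1). Here I would invoke that $\varrho$ is a vector bundle morphism into $TM$, so each $\varrho(s_i)$ is a genuine vector field, hence a derivation of $\mathcal{C}^\infty(M)$ (consistently with Proposition \ref{prop1}(3)); and since the Lie bracket of vector fields is \emph{defined} as the commutator of their actions as derivations on functions, one has $[\varrho(s_1),\varrho(s_2)]_c = [\varrho(s_1),\varrho(s_2)]$. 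Chaining the two identities gives $\varrho([[s_1,s_2]]) = [\varrho(s_1),\varrho(s_2)]$, which is precisely Condition (1), establishing its redundancy.

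The main obstacle I anticipate is purely bookkeeping at this last identification rather than anything substantive: Proposition \ref{prop1} is stated in the very large space $\mathcal{T}$ of arbitrary self-maps of $\mathcal{C}^\infty(M)$, whereas Condition (1) is an equation between vector fields in $\Gamma(TM)$. One must therefore check carefully that $[\,,\,]_c$, restricted to the derivation-valued image of $\varrho$, really coincides with the Lie bracket on $\Gamma(TM)$, and that $\varrho([[s_1,s_2]])$—again a vector field because $\varrho$ is a bundle map—equals that commutator as an element of $\mathcal{T}$. Once this is settled, no additional argument is required for the Nambu–Poisson specializations: for either the Iba\~{n}ez bracket (\ref{ibala}) or the Hagiwara bracket (\ref{hagla}), as soon as the bracket is known to satisfy the Leibniz identity and the anchor $\Pi$ obeys the Leibniz rule of Condition (2), the identical argument delivers Condition (1), so the redundancy holds there as a special instance.
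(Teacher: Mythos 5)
Your proposal is correct and follows exactly the route the paper intends: the paper presents this corollary as an immediate consequence of Proposition \ref{prop1}(1), and your verification that the Leibniz identity and Condition (2) of Definition \ref{leibalg} instantiate assumptions (i) and (ii) of Section 2, together with the identification of $[\,,\,]_c$ with the Lie bracket of vector fields, simply makes explicit the bookkeeping the paper leaves tacit.
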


Though this type of redundancy has been pointed out for Lie algebroids \cite{Grabowski,Marle}, the redundancy for the Leibniz algebroids has not been proved earlier. Apart from the aforementioned result, our derivation gives yet another proof of this redundancy in the Lie algebroid case (the definition includes the skew-symmetric condition for the bracket).

\begin{corollary}
If $[X,Y] = -[Y,X] \quad \forall \, X,Y, \in \Gamma(E)$, (i.e., the bracket is skew-symmetric) and $\mathfrak{a}$ is a linear map then $\mathfrak{a}$ preserves antisymmetry ie $(\mathfrak{a}[X,Y]) = \mathfrak{a}(-[X,Y])$.
\end{corollary}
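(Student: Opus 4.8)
The plan is to obtain the statement as an immediate consequence of Proposition \ref{prop1}(2) together with the skew-symmetry hypothesis and the linearity of $\mathfrak{a}$; no fresh manipulation of the bracket is needed, since the structural identity $\mathfrak{a}([X,Y]) = [\mathfrak{a}(X),\mathfrak{a}(Y)]_c$ has already done all the work.

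First I would invoke the skew-symmetry of the bracket, $[X,Y] = -[Y,X]$, and apply the linear map $\mathfrak{a}$ to both sides to obtain $\mathfrak{a}([X,Y]) = -\mathfrak{a}([Y,X])$. I would then note that this is precisely the conclusion of Proposition \ref{prop1}(2), which was itself derived purely from the antisymmetry of the commutator in $\mathcal{T}$, namely $[T,S]_c = -[S,T]_c$, and therefore holds for any linear anchor irrespective of whether $[\, , \,]$ is skew-symmetric. The two statements coincide, which is exactly the assertion that $\mathfrak{a}$ preserves antisymmetry.

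The content of the corollary is thus the mutual compatibility of two antisymmetries: the skew-symmetry assumed on $\Gamma(E)$ and the intrinsic antisymmetry of the commutator on $\mathcal{T}$ are carried into one another by $\mathfrak{a}$. There is no genuine obstacle in the argument; the only subtlety worth flagging is that the conclusion must be read through linearity, so that $\mathfrak{a}(-[Y,X]) = -\mathfrak{a}([Y,X])$, and it is this identification that aligns the skew-symmetry hypothesis with Proposition \ref{prop1}(2).
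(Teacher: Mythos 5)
Your argument is correct and is exactly the intended one: the paper states this corollary without any written proof, and the immediate derivation is precisely what you give --- apply the linear map $\mathfrak{a}$ to the skew-symmetry relation $[X,Y]=-[Y,X]$ and observe that the result coincides with the conclusion of Proposition \ref{prop1}(2). You have also, sensibly, read the conclusion as $\mathfrak{a}([X,Y])=-\mathfrak{a}([Y,X])$; note that the printed form $\mathfrak{a}([X,Y])=\mathfrak{a}(-[X,Y])$ must be a typo, since combined with linearity it would force $\mathfrak{a}([X,Y])=0$.
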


\section{Analysis of Hagiwara's Bracket}

Let $M$ be a manifold and $\Pi$ a $p$-vector on $M$. Define
\begin{equation}
[[\alpha,\beta]] = \mathcal{L}_{\Pi(\alpha)} \beta - \imath_{\Pi(\beta)} d\alpha
\end{equation}

\begin{proposition}
$\Pi$ is the anchor map for $[[ \ ,\ ]]$
\end{proposition}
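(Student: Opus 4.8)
The plan is to verify directly that $\Pi$ obeys the anchor axiom (assumption (ii) of Section 2) for the Hagiwara bracket: namely that for all $f \in \mathcal{C}^\infty(M)$ and $\alpha,\beta \in \Omega^{p-1}(M)$ one has $[[\alpha,f\beta]] - f[[\alpha,\beta]] = (\Pi(\alpha)f)\beta$, where $\Pi(\alpha)$ acts on functions as a vector field via the directional derivative. Establishing this identity is precisely what it means to say $\Pi$ is the anchor map $\mathfrak{a}$ for $[[\,,\,]]$, and Proposition \ref{prop1} then applies automatically.

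First I would substitute $f\beta$ into the two terms of the bracket separately. For the Lie-derivative term I would invoke the Leibniz rule $\mathcal{L}_{\Pi(\alpha)}(f\beta) = (\mathcal{L}_{\Pi(\alpha)}f)\beta + f\,\mathcal{L}_{\Pi(\alpha)}\beta$, together with the elementary fact that the Lie derivative of a function is its directional derivative, $\mathcal{L}_{\Pi(\alpha)}f = \Pi(\alpha)f$. This produces the single non-tensorial term $(\Pi(\alpha)f)\beta$ plus $f\,\mathcal{L}_{\Pi(\alpha)}\beta$. For the interior-product term, the key observation is that $\Pi$ is a vector bundle morphism and hence $\mathcal{C}^\infty(M)$-linear, so $\Pi(f\beta) = f\,\Pi(\beta)$; combined with the tensoriality of $\imath$ in its vector argument this gives $\imath_{\Pi(f\beta)}d\alpha = f\,\imath_{\Pi(\beta)}d\alpha$. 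Thus the second term is purely tensorial in $\beta$ and contributes nothing to the derivation part — all the non-tensorial behaviour is carried by the Lie-derivative term alone.

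Collecting the pieces yields $[[\alpha,f\beta]] = (\Pi(\alpha)f)\beta + f\big(\mathcal{L}_{\Pi(\alpha)}\beta - \imath_{\Pi(\beta)}d\alpha\big) = (\Pi(\alpha)f)\beta + f[[\alpha,\beta]]$, which is exactly the anchor identity. I do not expect any serious obstacle: the computation is short, and the only point requiring care is tracking which argument $\Pi$ is applied to, and recognising that the $\mathcal{C}^\infty(M)$-linearity of $\Pi$ prevents the $\imath_{\Pi(\beta)}d\alpha$ piece from generating any spurious derivation term. This is the mild subtlety that distinguishes the verification from the formally identical one for the Ibáñez bracket, where the $(-1)^p(\Pi(d\alpha))\beta$ term is likewise tensorial in $\beta$.
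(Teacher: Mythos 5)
Your proposal is correct and follows essentially the same route as the paper: expand $\mathcal{L}_{\Pi(\alpha)}(f\beta)$ by the Leibniz rule and use the $\mathcal{C}^\infty(M)$-linearity of $\Pi$ (hence tensoriality of $\imath_{\Pi(f\beta)}d\alpha = f\,\imath_{\Pi(\beta)}d\alpha$) to isolate the single derivation term $(\mathcal{L}_{\Pi(\alpha)}f)\beta$. Your write-up is in fact slightly cleaner, since the paper's displayed computation carries a stray $\Pi(f\beta)$ in the intermediate lines where $\Pi(\beta)$ is meant.
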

\begin{proof}
\begin{eqnarray} \nonumber
[[\alpha,f\beta]] & = & \mathcal{L}_{\Pi(\alpha)}f\beta - \imath_{\Pi(f\beta)} d\alpha \\ \nonumber
& = & f (\mathcal{L}_{\Pi(\alpha)}\beta) + (\mathcal{L}_{\Pi(\alpha)}f)\beta - f \imath_{\Pi(f\beta)} d\alpha \\ \nonumber
& = & f (\mathcal{L}_{\Pi(\alpha)}\beta) - f \imath_{\Pi(f\beta)} d\alpha + (\mathcal{L}_{\Pi(\alpha)}f)\beta  \\ \nonumber
[[\alpha,f\beta]] & = & f [[\alpha,\beta]] + (\mathcal{L}_{\Pi(\alpha)}f)\beta
\end{eqnarray}
\end{proof}
so that $\mathcal{L}_{\Pi(\alpha)}f = \mathfrak{a}f$

\begin{proposition}
$\Pi[[\alpha,\beta]] = [\Pi\alpha,\Pi\beta]_c$ if and only if $\Pi$ is a Nambu-Poisson tensor.
\end{proposition}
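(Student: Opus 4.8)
The plan is to prove the two implications of the equivalence separately, exploiting that the preceding proposition identifies the anchor $\mathfrak{a}(\alpha)$ with the vector field $\Pi(\alpha)$ acting as a derivation on $\mathcal{C}^\infty(M)$. Consequently the commutator $[\Pi\alpha,\Pi\beta]_c$ appearing in the statement is simply the Lie bracket of the vector fields $\Pi\alpha$ and $\Pi\beta$, and the asserted identity reads $\Pi([[\alpha,\beta]]) = [\Pi\alpha,\Pi\beta]$. Throughout I use that $\Pi(\beta) = \imath_\beta\Lambda$ and that the preceding proposition has already verified assumption (ii) of Section 2 for this bracket, with anchor $\mathfrak{a}(\alpha) = \mathcal{L}_{\Pi(\alpha)}$.

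For the forward implication (if $\Pi$ is Nambu-Poisson then the anchor preserves brackets) I would not compute directly at all. When $\Lambda$ is a Nambu-Poisson tensor the Hagiwara bracket (\ref{hagla}) is known to satisfy the Leibniz identity, that is, assumption (i) of Section 2 holds \cite{Hagiwara}. Since assumption (ii) is already in force, Proposition \ref{prop1}(1) applies verbatim and yields $\Pi([[\alpha,\beta]]) = [\Pi\alpha,\Pi\beta]_c$. Thus this direction is an immediate application of our main result and needs no further work.

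For the converse (if the anchor preserves brackets then $\Pi$ is Nambu-Poisson) a short computation suffices, and the key is to test the hypothesis on exact forms. Taking $\alpha = df_1\wedge\cdots\wedge df_{p-1}$ makes $d\alpha = 0$, so the term $\imath_{\Pi\beta}d\alpha$ drops out and $[[\alpha,\beta]] = \mathcal{L}_{\Pi\alpha}\beta$, while $\Pi\alpha = X_{f_1,\ldots,f_{p-1}}$ becomes the Hamiltonian vector field. Expanding $[\Pi\alpha,\Pi\beta] = \mathcal{L}_{\Pi\alpha}(\imath_\beta\Lambda)$ by means of the derivation identity $\mathcal{L}_X(\imath_\beta\Lambda) = \imath_{\mathcal{L}_X\beta}\Lambda + \imath_\beta(\mathcal{L}_X\Lambda)$, and comparing with $\Pi([[\alpha,\beta]]) = \imath_{\mathcal{L}_{\Pi\alpha}\beta}\Lambda$, the common term cancels and the hypothesis collapses to $\imath_\beta(\mathcal{L}_{\Pi\alpha}\Lambda) = 0$ for every $(p-1)$-form $\beta$. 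Since a $p$-vector is determined by its contractions against all $(p-1)$-forms, this forces $\mathcal{L}_{X_{f_1,\ldots,f_{p-1}}}\Lambda = 0$ for every choice of functions, which is precisely the fundamental identity, i.e. $\Lambda$ is Nambu-Poisson.

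The main obstacle is the converse direction, and it is organisational rather than computational. First, one must invoke the characterisation that the fundamental identity is equivalent to the vanishing $\mathcal{L}_{X_{f_1,\ldots,f_{p-1}}}\Lambda = 0$ of the Lie derivative of $\Lambda$ along every Hamiltonian vector field; without this reformulation the contraction computation does not visibly land on the Nambu-Poisson condition. Second, the cancellation above succeeds only because $\alpha$ is exact: for a general $\alpha$ the defect $[\Pi\alpha,\Pi\beta] - \Pi([[\alpha,\beta]])$ acquires the extra contribution $\Pi(\imath_{\Pi\beta}d\alpha)$, which does not manifestly vanish under the fundamental identity alone. This is exactly why I route the forward implication through the Leibniz identity and Proposition \ref{prop1} rather than attempting one symmetric computation valid for all $\alpha,\beta$.
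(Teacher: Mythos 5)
Your converse direction (bracket-preservation $\Rightarrow$ Nambu--Poisson) is correct and is essentially the paper's own argument made explicit: specialise to $\alpha = df_1\wedge\cdots\wedge df_{p-1}$ so that $[[\alpha,\beta]]=\mathcal{L}_{\Pi\alpha}\beta$, use the derivation property of $\mathcal{L}_{\Pi\alpha}$ over the contraction $\imath_\beta\Lambda$ to isolate $\imath_\beta(\mathcal{L}_{\Pi\alpha}\Lambda)$, and conclude invariance of $\Lambda$ under Hamiltonian vector fields, which is the fundamental identity. The paper states exactly this chain (deferring details to Baraglia), so here you have merely supplied the computation it omits.

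The forward direction is where you part company with the paper, and where there is a genuine gap. You propose to deduce bracket-preservation from the Leibniz identity via Proposition \ref{prop1}(1), citing Hagiwara for the fact that the Nambu--Poisson condition implies the Leibniz identity. But within this paper the only proof of that Leibniz identity is Proposition \ref{hagjacobi}, and its very first step substitutes $\imath_{\Pi([[\beta,\gamma]])}\,d\alpha = \imath_{[\Pi\beta,\Pi\gamma]_c}\,d\alpha$ --- that is, it \emph{uses} the implication you are trying to prove. The standard proofs in the literature follow the same order: bracket-preservation is established first (directly from the invariance condition) and the Leibniz identity is derived from it. So your route is circular unless you can exhibit a proof of the Leibniz identity that does not pass through bracket-preservation, and you have not done so. You correctly diagnose the difficulty yourself: for general $\alpha$ the defect is $\imath_\beta(\mathcal{L}_{\Pi\alpha}\Lambda) + \imath_{\imath_{\Pi\beta}d\alpha}\Lambda$, and showing this vanishes under the fundamental identity is precisely the nontrivial content of the forward implication. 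The paper handles it by asserting (with a reference to Baraglia) that the full bracket-preserving condition is equivalent to its restriction to decomposable exact $\alpha$; you need either that reduction or a direct verification that the extra term cancels, rather than an appeal to a downstream consequence.
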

\begin{proof}
(See Baraglia \cite{Baraglia} for details). The bracket preserving condition is seen to be equivalent to
\begin{equation}
\mathcal{L}_{\Pi(df_1 \wedge \cdots \wedge df_{p-1})}(\Pi \beta) = \Pi(\mathcal{L}_{\Pi(df_1 \wedge \cdots \wedge df_{p-1})}\beta)
\end{equation}
which in turn is equivalent to the invariance condition
\begin{equation}
\mathcal{L}_{\Pi(df_1 \wedge \cdots \wedge df_{p-1})}(\Pi) = 0
\end{equation}
The last condition is known to be equivalent to the fundamental identity for the Nambu-Poisson bracket is given by
\begin{equation}
\{f_1, \cdots,f_p\} = \Pi(df_1 \wedge \cdots \wedge df_{p-1})
\end{equation}
\end{proof}
\begin{proposition} \label{hagjacobi}
Let $(M,\Pi)$ be a Nambu-Poisson manifold. Then the Leibniz identity holds
\begin{equation*}
[[\alpha, [[\beta,\gamma]] \ ]] - [[ \ [[\alpha,\beta]], \gamma]] - [[\beta, [[\alpha,\gamma]] \ ]] = 0
\end{equation*}
\end{proposition}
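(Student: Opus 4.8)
The plan is to turn the Nambu-Poisson hypothesis into the single structural fact that makes the identity close, namely that the anchor preserves brackets. By the preceding proposition the fundamental identity is equivalent to $\Pi[[\alpha,\beta]] = [\Pi\alpha,\Pi\beta]_c$, the right-hand side being the commutator (Lie bracket) of vector fields. Abbreviating $X_\alpha := \Pi(\alpha)$, this says $X_{[[\alpha,\beta]]} = [X_\alpha,X_\beta]$. Notice that this is the converse reading of Proposition \ref{prop1}: instead of deriving bracket preservation from the Leibniz identity, I take bracket preservation as given (it is supplied by the Nambu-Poisson condition) and use it to produce the Leibniz identity.

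First I would expand the three iterated brackets $[[\alpha,[[\beta,\gamma]]]]$, $[[[[\alpha,\beta]],\gamma]]$ and $[[\beta,[[\alpha,\gamma]]]]$ by inserting the definition $[[\alpha,\beta]] = \mathcal{L}_{X_\alpha}\beta - \imath_{X_\beta}d\alpha$ at both levels. Every time the anchor of an inner bracket appears I replace it by a Lie bracket of vector fields via $X_{[[\beta,\gamma]]} = [X_\beta,X_\gamma]$ and $X_{[[\alpha,\beta]]} = [X_\alpha,X_\beta]$. I then normalise each expression using the standard Cartan calculus: Cartan's formula $\mathcal{L}_X = d\imath_X + \imath_X d$, the commutator identities $[\mathcal{L}_X,\imath_Y] = \imath_{[X,Y]}$ and $[\mathcal{L}_X,\mathcal{L}_Y] = \mathcal{L}_{[X,Y]}$, the commutation $d\mathcal{L}_X = \mathcal{L}_X d$, and $d^2 = 0$. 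After this normalisation each of the three terms is a sum of monomials built from $\mathcal{L}_{X_\alpha}\mathcal{L}_{X_\beta}\gamma$, $\imath_{X_\gamma}d\mathcal{L}_{X_\alpha}\beta$, $\imath_{[X_\alpha,X_\gamma]}d\beta$, $\imath_{[X_\beta,X_\gamma]}d\alpha$ and their $\alpha\leftrightarrow\beta$ images.

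Forming the alternating combination, the double Lie-derivative terms cancel against the $\mathcal{L}_{[X_\alpha,X_\beta]}\gamma$ contribution, the two $\imath_{X_\gamma}d\mathcal{L}_{X_\alpha}\beta$ terms cancel, and the $\imath_{[X_\alpha,X_\gamma]}d\beta$ terms cancel in a pair. What survives is a short residue of the shape $\imath_{[X_\beta,X_\gamma]}d\alpha - \mathcal{L}_{X_\beta}\imath_{X_\gamma}d\alpha + \imath_{X_\gamma}d\mathcal{L}_{X_\beta}\alpha$; rewriting the last summand as $\imath_{X_\gamma}\mathcal{L}_{X_\beta}d\alpha = \mathcal{L}_{X_\beta}\imath_{X_\gamma}d\alpha - \imath_{[X_\beta,X_\gamma]}d\alpha$ makes it cancel identically, which proves the identity.

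The main obstacle, and the sole point where the Nambu-Poisson hypothesis is actually consumed, is the bookkeeping of the interior-product terms $\imath_{[X,Y]}(\cdot)$ thrown off by the commutator $[\mathcal{L}_X,\imath_Y]$. These are exactly the terms that refuse to cancel unless $X_{[[\beta,\gamma]]}$ and $X_{[[\alpha,\beta]]}$ can be identified with the corresponding Lie brackets; were bracket preservation to fail, the $\imath_{[X_\beta,X_\gamma]}d\alpha$ contributions would be left unbalanced and the Leibniz identity would be obstructed. Everything else is routine graded-derivation bookkeeping.
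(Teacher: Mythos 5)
Your proposal is correct and follows essentially the same route as the paper: expand the three iterated brackets via the definition, use the Nambu-Poisson hypothesis only through the bracket-preservation $\Pi[[\cdot,\cdot]]=[\Pi(\cdot),\Pi(\cdot)]_c$ established in the preceding proposition, and cancel the residue with $[\mathcal{L}_X,\imath_Y]=\imath_{[X,Y]}$ and the Cartan formula. Your accounting of which interior-product terms survive and how they cancel is accurate and in fact somewhat more explicit than the paper's own write-up.
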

\begin{proof}
Let us write down the explicit terms
\begin{eqnarray}
[[\alpha, [[\beta,\gamma]] \ ]] & = & \mathcal{L}_{\Pi(\alpha)} [[\beta,\gamma]] - \imath_{\Pi([[\beta,\gamma]])} d\alpha \\ \nonumber
& = & \mathcal{L}_{\Pi(\alpha)} \big\{ \mathcal{L}_{\Pi(\beta)} \gamma - \imath_{\Pi(\gamma)} d \beta \big \} - \imath_{[\Pi\beta,\Pi\gamma]_c} d\alpha \\ \nonumber
& = & \mathcal{L}_{\Pi(\alpha)} \mathcal{L}_{\Pi(\beta)}\gamma - \mathcal{L}_{\Pi(\alpha)} \imath_{\Pi(\gamma)} d \beta  - \imath_{[\Pi\beta,\Pi\gamma]_c} d\alpha
\end{eqnarray}
\begin{eqnarray}
[[\ [[\alpha,\beta,]]\gamma]] & = & \mathcal{L}_{\Pi([[\alpha,\beta]])} \gamma - \imath_{\Pi(\gamma)} d[[\alpha,\beta]] \\ \nonumber
& = & \mathcal{L}_{[\Pi(\alpha),\Pi(\beta)]} \gamma - \imath_{\Pi(\gamma)} d \big \{ \mathcal{L}_{\Pi(\alpha)} \beta - \imath_{\Pi(\beta)} d\alpha \big\}\\ \nonumber
& = & \mathcal{L}_{[\Pi(\alpha),\Pi(\beta)]_c} \gamma - \imath_{\Pi(\gamma)} \mathcal{L}_{\Pi(\alpha)} d\beta + \imath_{\Pi(\gamma)}d \big(\imath_{\Pi(\beta)} d\alpha \big)
\end{eqnarray}
\begin{eqnarray}
[[\beta, [[\alpha,\gamma]] \ ]] & = & \mathcal{L}_{\Pi(\beta)} [[\alpha,\gamma]] - \imath_{\Pi([[\alpha,\gamma]])} d\beta \\ \nonumber
& = & \mathcal{L}_{\Pi(\beta)} \big\{ \mathcal{L}_{\Pi(\alpha)} \gamma - \imath_{\Pi(\gamma)} d \alpha \big \} - \imath_{[\Pi\alpha,\Pi\gamma]_c} d\beta \\ \nonumber
& = & \mathcal{L}_{\Pi(\beta)} \mathcal{L}_{\Pi(\alpha)}\gamma - \mathcal{L}_{\Pi(\beta)} \imath_{\Pi(\gamma)} d \alpha  - \imath_{[\Pi\alpha,\Pi\gamma]_c} d\beta
\end{eqnarray}
Terms involving Lie derivatives cancel. To show that the rest of the terms are identically equal to zero we use of the following relation
\begin{equation*}
\imath_{[X,Y]_c} \alpha = [\mathcal{L}_X,\imath_Y]_c \ \alpha = [\imath_Y,\mathcal{L}_X]_c \ \alpha
\end{equation*}
and the Cartan formula
\begin{equation*}
\mathcal{L}_X \alpha = \imath_X d \alpha + d \imath_X \alpha
\end{equation*}
\end{proof}
\begin{corollary}
($\bigwedge^{p-1}T^\ast M, [[ \ ,\ ]], \Pi$) is a Leibniz algebroid.
\end{corollary}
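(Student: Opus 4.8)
The plan is to verify the three defining requirements of Definition \ref{leibalg} for the triple $(\bigwedge^{p-1}T^\ast M, [[\, , \,]], \Pi)$, assembling each from a result already established in this section. I regard the corollary as a bookkeeping step that collects the preceding propositions rather than a computation in its own right.

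First, the underlying Leibniz algebra structure on $\Omega^{p-1}(M) = \Gamma(\bigwedge^{p-1}T^\ast M)$ is exactly the content of Proposition \ref{hagjacobi}: when $M$ carries a Nambu-Poisson structure, the Hagiwara bracket $[[\alpha,\beta]] = \mathcal{L}_{\Pi(\alpha)}\beta - \imath_{\Pi(\beta)}d\alpha$ satisfies the Leibniz identity of Definition \ref{leibnizdef}. Second, condition (2) of Definition \ref{leibalg}, namely the module-Leibniz rule $[[\alpha, f\beta]] = f[[\alpha,\beta]] + \varrho(\alpha)(f)\beta$, is precisely the computation carried out in the first proposition of this section, which also identifies the induced derivation as $\varrho(\alpha)(f) = \mathcal{L}_{\Pi(\alpha)}f = \Pi(\alpha)(f)$. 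Thus $\Pi$ plays the role of the anchor $\mathfrak{a}$ of Section 2, with $\mathfrak{a}(\alpha)f = \Pi(\alpha)(f)$, so that the defining relation $(\mathfrak{a}(\alpha)f)\beta = [[\alpha, f\beta]] - f[[\alpha,\beta]]$ of assumption (ii) holds here.

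For the remaining condition (1), the bracket-preserving property $\varrho[[\alpha,\beta]] = [\varrho(\alpha),\varrho(\beta)]$, I would invoke Proposition \ref{prop1}(1) rather than re-derive it, which is the whole point of the redundancy theme of the paper. The hypotheses of that proposition are met: $\Omega^{p-1}(M)$ is a faithful $\mathcal{C}^\infty(M)$-module, the Leibniz identity holds by Proposition \ref{hagjacobi}, and the anchor relation holds by the step just noted. Proposition \ref{prop1}(1) then yields $\mathfrak{a}([[\alpha,\beta]]) = [\mathfrak{a}(\alpha),\mathfrak{a}(\beta)]_c$ as transformations of $\mathcal{C}^\infty(M)$. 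The only point requiring care is the translation of this identity in $\mathcal{T}$ back into an identity of vector fields: since each $\mathfrak{a}(\alpha)$ is the derivation associated to the vector field $\Pi(\alpha)$, and the Lie bracket of vector fields is by definition the commutator of the corresponding derivations on $\mathcal{C}^\infty(M)$, one gets $[\mathfrak{a}(\alpha),\mathfrak{a}(\beta)]_c = [\Pi(\alpha),\Pi(\beta)]$ and $\mathfrak{a}([[\alpha,\beta]]) = \Pi([[\alpha,\beta]])$, giving condition (1). Alternatively, condition (1) follows directly from the earlier proposition stating that $\Pi[[\alpha,\beta]] = [\Pi\alpha,\Pi\beta]_c$ is equivalent to $\Pi$ being Nambu-Poisson, which holds by hypothesis.

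I expect the main, though modest, obstacle to be this last identification together with checking that $\Pi$ is genuinely $\mathcal{C}^\infty(M)$-linear, so that it defines a vector bundle morphism $E \to TM$ and so that each $\mathfrak{a}(\alpha)$ is a derivation in the sense of Proposition \ref{prop1}(3); these are routine but should be stated explicitly. With all three conditions of Definition \ref{leibalg} in hand the triple is a Leibniz algebroid, and Proposition \ref{prop1} shows condition (1) to be redundant once conditions (2) and the Leibniz identity hold, exactly as asserted in the earlier redundancy corollary.
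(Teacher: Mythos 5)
Your proposal is correct and matches the paper's (implicit) argument: the corollary is stated without proof precisely because it assembles the three preceding results of the section --- the anchor computation $[[\alpha,f\beta]] = f[[\alpha,\beta]] + (\mathcal{L}_{\Pi(\alpha)}f)\beta$, the Leibniz identity of Proposition \ref{hagjacobi}, and the bracket-preserving property obtained either from the Nambu-Poisson equivalence or, redundantly, from Proposition \ref{prop1}(1). Your additional care about identifying the commutator in $\mathcal{T}$ with the Lie bracket of vector fields is a reasonable explicit remark but does not change the route.
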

\begin{corollary} \label{cirHag}
For the bracket $[[ \ ,\ ]]$ the following are equivalent :
\begin{enumerate}
\item $\Pi[[\alpha,\beta]] = [\Pi\alpha,\Pi\beta]_c \qquad \forall \quad \alpha, \beta$ .

\item
$[[\alpha, [[\beta,\gamma]] \ ]] = [[ \ [[\alpha,\beta]], \gamma]] + [[\beta, [[\alpha,\gamma]] \ ]]$ (Leibniz Identity).
\end{enumerate}
\end{corollary}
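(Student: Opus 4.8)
The plan is to prove the two implications $(2)\Rightarrow(1)$ and $(1)\Rightarrow(2)$ separately, assembling them from results already established in this section together with Proposition \ref{prop1}. Schematically there is a triangle of statements: condition (1) (the anchor preserves brackets), condition (2) (the Leibniz identity), and the auxiliary statement that $\Pi$ is a Nambu--Poisson tensor. I would close the logical loop among the three so that (1) and (2) become equivalent.

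For $(2)\Rightarrow(1)$ I would appeal directly to Proposition \ref{prop1}(1). Here the relevant module is $\Omega^{p-1}(M)=\Gamma(\bigwedge^{p-1}T^\ast M)$, which is faithful over $\mathcal{C}^\infty(M)$, so the standing hypotheses of Proposition \ref{prop1} are available. Hypothesis (ii) of that proposition---the existence of an anchor---is supplied by the first Proposition of this section, which shows that $[[\alpha,f\beta]]=f[[\alpha,\beta]]+(\mathcal{L}_{\Pi(\alpha)}f)\beta$; that is, $\Pi$ serves as the anchor with $\mathfrak{a}(\alpha)f=\mathcal{L}_{\Pi(\alpha)}f$. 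Hypothesis (i) is precisely the Leibniz identity assumed in (2). Proposition \ref{prop1}(1) then delivers $\Pi[[\alpha,\beta]]=[\Pi\alpha,\Pi\beta]_c$, which is exactly (1).

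For $(1)\Rightarrow(2)$ I would chain the two geometric results of this section. By the second Proposition, condition (1) holds if and only if $\Pi$ is a Nambu--Poisson tensor, so (1) forces $(M,\Pi)$ to be a Nambu--Poisson manifold; Proposition \ref{hagjacobi} then guarantees the Leibniz identity (2). This closes the loop and completes the equivalence.

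The point to watch---rather than a genuine obstacle---is that the two directions rest on quite different footing: $(2)\Rightarrow(1)$ is the purely algebraic anchor argument of Proposition \ref{prop1}, whereas $(1)\Rightarrow(2)$ passes through the fundamental-identity characterisation of Nambu--Poisson tensors. One should therefore confirm that Proposition \ref{prop1} applies verbatim (faithfulness of $\Omega^{p-1}(M)$, and the identification of the anchor's target with the vector-field commutator $[\,\cdot\,,\,\cdot\,]_c$), and check that no circularity is introduced between the second Proposition and Proposition \ref{hagjacobi}. With these verifications in place the corollary follows immediately.
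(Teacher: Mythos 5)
Your proposal is correct and follows essentially the same route as the paper, whose proof is simply the instruction to combine Proposition \ref{prop1} with Proposition \ref{hagjacobi}; you have merely made explicit the intermediate step (the section's second proposition, identifying the bracket-preserving condition with $\Pi$ being a Nambu--Poisson tensor) that the paper leaves implicit in the direction $(1)\Rightarrow(2)$. No gaps.
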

\begin{proof}
Combine Prop. \ref{prop1} and Prop. \ref{hagjacobi}.
\end{proof}

\section{Canonical bracket on $(p-1)$ forms}

Let $M$ be a $n$ dimensional manifold and $\Pi$ a $p$-vector $3 \leq p \leq n$. Ibanez et. al. \cite{Ibanez} have introduced a bracket, on $(p-1)$ forms, canonically associated to the Nambu-Poisson bracket $\{ \ , \ , \}$. We do not reproduce the proofs given by Ibanez and Hagiwara.
\begin{proposition}
\begin{equation}
[[\alpha,\beta]]_I = \mathcal{L}_{\Pi(\alpha)} \beta + (-1)^p \ (\Pi(d\alpha))\beta \quad \forall \quad \alpha, \beta
\end{equation}
is the unique bracket on $(p-1)$ forms such that
\begin{enumerate}

\item $\Pi$ is the anchor

\item
$
[df_1 \wedge \cdots \wedge df_{p-1}, dg_1 \wedge \cdots \wedge dg_{p-1}] = \sum_{i=1}^{p-1} dg_1 \wedge \cdots \wedge d\{f_1, \cdots , f_{p-1}, g_i \} \wedge \cdots dg_{p-1}
$
where $\{ \ , \ , \}$ is the Nambu-Poisson bracket on functions.
\end{enumerate}
\end{proposition}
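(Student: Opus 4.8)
The plan is to split the statement into an existence part (the bracket $[[\cdot,\cdot]]_I$ genuinely satisfies (1) and (2)) and a uniqueness part (any $\mathbb{R}$-bilinear bracket satisfying (1) and (2) coincides with it). For existence, condition (1) is checked exactly as in the Hagiwara case treated above: in $[[\alpha,\beta]]_I=\mathcal{L}_{\Pi(\alpha)}\beta+(-1)^p(\Pi(d\alpha))\beta$ the second term is already $\mathcal{C}^\infty(M)$-linear in $\beta$, while $\mathcal{L}_{\Pi(\alpha)}(f\beta)=f\,\mathcal{L}_{\Pi(\alpha)}\beta+(\mathcal{L}_{\Pi(\alpha)}f)\beta$ produces precisely the anchor term $(\Pi(\alpha)f)\beta$, so $\Pi$ is the anchor. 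Condition (2) is the defining computation of Ibanez et al., which I would either cite or verify directly on $\alpha=df_1\wedge\cdots\wedge df_{p-1}$ and $\beta=dg_1\wedge\cdots\wedge dg_{p-1}$ using $\Pi(df_1\wedge\cdots\wedge df_{p-1})=\{f_1,\dots,f_{p-1},-\}$ together with Cartan's formula.

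For uniqueness, let $[[\cdot,\cdot]]'$ be a second bracket satisfying (1) and (2), and set $B(\alpha,\beta)=[[\alpha,\beta]]'-[[\alpha,\beta]]_I$. Then $B$ is $\mathbb{R}$-bilinear, and since both brackets carry the \emph{same} anchor $\Pi$, the inhomogeneous anchor terms cancel in the second slot:
\begin{equation*}
B(\alpha,f\beta)=\big(f[[\alpha,\beta]]'+(\Pi(\alpha)f)\beta\big)-\big(f[[\alpha,\beta]]_I+(\Pi(\alpha)f)\beta\big)=f\,B(\alpha,\beta),
\end{equation*}
so $B$ is $\mathcal{C}^\infty(M)$-linear in its second argument. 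Condition (2) says $B$ vanishes on pairs of exact decomposable forms. Because the forms $dg_1\wedge\cdots\wedge dg_{p-1}$ generate $\Omega^{p-1}(M)$ as a $\mathcal{C}^\infty(M)$-module (locally the coordinate differentials $dx^{i_1}\wedge\cdots\wedge dx^{i_{p-1}}$ are a frame), the second-slot $\mathcal{C}^\infty(M)$-linearity upgrades this to $B(df_1\wedge\cdots\wedge df_{p-1},\beta)=0$ for \emph{every} $\beta\in\Omega^{p-1}(M)$.

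It remains to remove the restriction that the first argument be exact and decomposable, and this is the main obstacle, because unlike the second slot the bracket is genuinely first order (non-tensorial) in the first argument: one computes $[[f\alpha,\beta]]_I-f[[\alpha,\beta]]_I=df\wedge\imath_{\Pi(\alpha)}\beta+(-1)^p(\Pi(df\wedge\alpha))\beta\neq0$, so $B$ cannot be simplified by pulling functions out of the first slot. My plan here is to exploit the graded Leibniz rule $d(fg)=f\,dg+g\,df$: writing $h\,dg_1\wedge\cdots\wedge dg_{p-1}=d(hg_1)\wedge dg_2\wedge\cdots\wedge dg_{p-1}-g_1\,dh\wedge dg_2\wedge\cdots\wedge dg_{p-1}$ and using that $B$ annihilates the first (exact, decomposable) term shows that $B(h\,dg_1\wedge\cdots\wedge dg_{p-1},\beta)$ is totally antisymmetric in the $p$ functions $(h,g_1,\dots,g_{p-1})$. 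Combined with the locality of $B$ (both brackets being bidifferential operators, $B(\cdot,\beta)$ is a differential operator annihilating every exact decomposable form), a jet/symbol argument then forces $B(\cdot,\beta)$ to vanish on all of $\Omega^{p-1}(M)$: decomposable covectors span $\Lambda^{p-1}T^\ast_xM$ pointwise and their jets may be prescribed to arbitrary finite order, which kills the operator order by order. Faithfulness of the module $\Omega^{p-1}(M)$ then gives $B\equiv0$, i.e.\ uniqueness.

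I expect the symbol/jet step to be the delicate point. The second-slot reduction is routine module generation, but the first slot is exactly where the non-tensorial dependence has to be pinned down; in particular the argument presupposes that the competing bracket is itself a local (bidifferential) operator, and it is this locality — rather than conditions (1) and (2) in a purely algebraic reading — that supplies the leverage needed to propagate vanishing from exact decomposable forms to arbitrary $(p-1)$-forms.
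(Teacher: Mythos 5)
Your existence half is fine, and it is more than the paper offers: the paper's entire proof of this proposition is the citation ``See Ibanez et al Thm 3.6''. The problem is the uniqueness half, and the gap is not in your execution but in the hypotheses you are working from: conditions (1) and (2) alone do \emph{not} determine the bracket, so the jet/symbol step you flag as delicate cannot be made to work. Hagiwara's bracket $[[\alpha,\beta]]=\mathcal{L}_{\Pi(\alpha)}\beta-\imath_{\Pi(\beta)}d\alpha$, analysed in Section 4 of this same paper, also has $\Pi$ as its anchor, and it also satisfies condition (2): for $\alpha=df_1\wedge\cdots\wedge df_{p-1}$ the correction terms $-\imath_{\Pi(\beta)}d\alpha$ and $(-1)^p(\Pi(d\alpha))\beta$ both vanish because $d\alpha=0$, so both brackets reduce to $\mathcal{L}_{\Pi(df_1\wedge\cdots\wedge df_{p-1})}(dg_1\wedge\cdots\wedge dg_{p-1})=\sum_i dg_1\wedge\cdots\wedge d\{f_1,\dots,f_{p-1},g_i\}\wedge\cdots\wedge dg_{p-1}$. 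The difference of the two brackets is exactly the ``new bracket'' $\{\alpha,\beta\}=\imath_{\Pi(\beta)}d\alpha+(-1)^p(\Pi(d\alpha))\beta$ of Definition \ref{defdiffbrack}: it is $\mathcal{C}^\infty(M)$-linear in $\beta$, annihilates every closed first argument, is a perfectly local bidifferential operator, is compatible with your total-antisymmetry observation (since $d(h\,dg_1\wedge\cdots\wedge dg_{p-1})=dh\wedge dg_1\wedge\cdots\wedge dg_{p-1}$ is totally antisymmetric in $(h,g_1,\dots,g_{p-1})$), and is not identically zero for $p<n$ — e.g.\ for $\Pi=\partial_1\wedge\partial_2\wedge\partial_3$ on $\mathbb{R}^4$, $\alpha=x_4\,dx_1\wedge dx_2$, $\beta=dx_2\wedge dx_3$ one gets $\{\alpha,\beta\}=-dx_4\wedge dx_2$. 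So your difference map $B$ has an explicit nonzero realization satisfying every constraint you are able to extract; in particular the locality you hoped would supply the missing leverage is already present in the counterexample.

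What is missing is a hypothesis controlling the genuinely first-order, non-tensorial dependence on the first slot — precisely the term $df\wedge\imath_{\Pi(\alpha)}\beta+(-1)^p(\Pi(df\wedge\alpha))\beta$ you computed in $[[f\alpha,\beta]]_I-f[[\alpha,\beta]]_I$. The characterization actually proved in Ibanez et al.\ must include such a further condition on the first argument (otherwise Hagiwara's bracket would contradict it); as transcribed here the proposition omits it, and because the paper's proof is a bare citation the omission goes unnoticed. With that first-slot condition restored, your strategy does close: the second-slot $\mathcal{C}^\infty$-linearity of $B$ plus the prescribed first-slot behaviour reduces everything to the values on exact decomposable pairs, which condition (2) fixes, and no jet analysis is needed. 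As it stands, however, the uniqueness claim you set out to prove is false, and the honest conclusion of your attempt should be that the hypotheses are insufficient rather than that the symbol argument is merely delicate.
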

\begin{proof}
See Ibanez et al Thm 3.6
\end{proof}

\begin{proposition}
$(\bigwedge^{p-1} T^\ast M, [[ \ , \ ]]_I, \Pi)$ is a Leibniz algebroid.
\end{proposition}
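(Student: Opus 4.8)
The plan is to check the three clauses of Definition~\ref{leibalg} for the triple $(\bigwedge^{p-1}T^\ast M,\, [[\,\cdot\,,\cdot\,]]_I,\, \Pi)$, arranging the argument so that the homomorphism clause is never verified by hand. Clause~(2), namely $[[\alpha, f\beta]]_I = f[[\alpha,\beta]]_I + (\Pi(\alpha)f)\beta$ with $\Pi(\alpha)f = \mathcal{L}_{\Pi(\alpha)}f$, is exactly the content of the preceding proposition exhibiting $\Pi$ as the anchor; phrased in the language of Section~2, this is assumption~(ii) with $\mathfrak{a}(\alpha)=\mathcal{L}_{\Pi(\alpha)}$ acting on $\mathcal{C}^\infty(M)$. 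Granting this, Proposition~\ref{prop1}(1) tells us that clause~(1), the bracket-preserving identity $\Pi[[\alpha,\beta]]_I = [\Pi\alpha,\Pi\beta]_c$, follows automatically \emph{once the Leibniz identity is known}. Hence the entire burden of the proof rests on establishing that $[[\,\cdot\,,\cdot\,]]_I$ satisfies the Leibniz identity of Definition~\ref{leibnizdef}.

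To prove the Leibniz identity I would substitute the explicit formula $[[\alpha,\beta]]_I = \mathcal{L}_{\Pi(\alpha)}\beta + (-1)^p (\Pi(d\alpha))\beta$ into each of the three double brackets and sort the resulting terms by type, exactly as in the proof of Proposition~\ref{hagjacobi}. The pure double-Lie-derivative contributions cancel by the operator identity $[\mathcal{L}_{\Pi(\alpha)}, \mathcal{L}_{\Pi(\beta)}]_c = \mathcal{L}_{[\Pi(\alpha),\Pi(\beta)]_c}$ together with $\Pi[[\alpha,\beta]]_I = [\Pi\alpha,\Pi\beta]_c$. The terms carrying the prefactor $(-1)^p\Pi(d(\cdot))$ are reorganised with Cartan's formula $\mathcal{L}_X = \imath_X d + d\,\imath_X$ and the contraction identity $\imath_{[X,Y]_c} = [\mathcal{L}_X,\imath_Y]_c$ quoted earlier, and are made to vanish by invoking the Nambu-Poisson structure through its invariance form $\mathcal{L}_{\Pi(df_1\wedge\cdots\wedge df_{p-1})}\Pi = 0$. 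Because $[[\,\cdot\,,\cdot\,]]_I$ is only $\mathbb{R}$-bilinear, I would first verify the identity on decomposable generators $\alpha = df_1\wedge\cdots\wedge df_{p-1}$, where $d\alpha = 0$ annihilates the second term and $\Pi(\alpha)$ reduces to the Nambu-Poisson Hamiltonian field, and then propagate to arbitrary forms using the derivation property of the anchor recorded in clause~(2).

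The step I expect to resist is the cancellation of the non-Lie-derivative terms and the attendant sign bookkeeping of the $(-1)^p$ factors, precisely where the fundamental identity must enter; this is the analogue of the delicate final cancellation in Proposition~\ref{hagjacobi}. A second, subtler point is the apparent circularity in using $\Pi[[\alpha,\beta]]_I = [\Pi\alpha,\Pi\beta]_c$ while proving the very identity that is supposed to yield it. I would resolve this exactly as the Hagiwara case is handled in the preceding section: establish the bracket-preserving identity independently as an equivalence with the Nambu-Poisson condition (the Baraglia-type computation), use it to close the Leibniz-identity calculation, and only then read Proposition~\ref{prop1}(1) as the conceptual statement that clause~(1) was redundant all along. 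With the Leibniz identity in hand and $\Pi$ already the anchor, all three clauses of Definition~\ref{leibalg} hold and the triple is a Leibniz algebroid; the argument runs entirely parallel to the Hagiwara bracket, the only genuine difference being the shape of the second term in $[[\,\cdot\,,\cdot\,]]_I$.
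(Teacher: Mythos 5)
The paper does not prove this proposition at all: its ``proof'' is a citation to Theorem~3.7 of Iba\~{n}ez et al.\ \cite{Ibanez}, so any honest attempt at a direct verification is automatically a different route. Your overall architecture is sound and matches what the paper itself says about the logic in Corollary~\ref{corIb}: the anchor property of $\Pi$ is the preceding proposition, the bracket-preserving clause is redundant by Proposition~\ref{prop1}(1) once the Leibniz identity is known, and the only real content is the Leibniz identity for $[[\ ,\ ]]_I$, whose proof (as the paper notes) must use the bracket-preserving condition established independently from the Nambu--Poisson fundamental identity --- so your resolution of the apparent circularity is exactly the right one.

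That said, as a proof your proposal is incomplete at precisely the point you flag. First, the cancellation of the $(-1)^p\,\Pi(d\alpha)$-type terms is the entire theorem and is never carried out; saying it is ``the analogue of the delicate final cancellation in Proposition~\ref{hagjacobi}'' is not a substitute for the computation, and the two brackets differ exactly in that second term, so the Hagiwara cancellation does not transfer verbatim. Second, your reduction to decomposable generators $\alpha = df_1\wedge\cdots\wedge df_{p-1}$ is not justified as stated: such forms span $\Omega^{p-1}(M)$ only over $\mathcal{C}^\infty(M)$, not over $\mathbb{R}$, and the Leibniz identity is not $\mathcal{C}^\infty(M)$-linear in any of its slots. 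Propagating from generators to arbitrary forms therefore requires tracking the correction terms produced by $[[f\alpha,\beta]]_I$ (note $\mathcal{L}_{\Pi(f\alpha)}\beta = f\mathcal{L}_{\Pi(\alpha)}\beta + df\wedge\imath_{\Pi(\alpha)}\beta$ and $\Pi(d(f\alpha)) = \Pi(df\wedge\alpha) + f\,\Pi(d\alpha)$), and this bookkeeping --- which is where Iba\~{n}ez et al.\ do the actual work --- is absent. Either carry out the full term-by-term cancellation for general $\alpha,\beta,\gamma$ in the style of Proposition~\ref{hagjacobi}, or do as the paper does and cite Theorem~3.7 of \cite{Ibanez} outright; the halfway position you occupy leaves the decisive steps unproved.
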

\begin{proof}
Thm. 3.7, Ibanez et. al. \cite{Ibanez}.
\end{proof}

\begin{corollary} \label{corIb}
\begin{enumerate}
\item $\Pi[[\alpha,\beta]]_I = [\Pi\alpha,\Pi\beta]_c \qquad \forall \quad \alpha, \beta$ if and only if 

\item the Leibniz identity holds for $[[ \ , \ ]]_I$.
\end{enumerate}
\end{corollary}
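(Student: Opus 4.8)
The plan is to establish the equivalence by proving the two implications separately, in direct parallel with the proof of Corollary~\ref{cirHag} for Hagiwara's bracket. The implication $(2)\Rightarrow(1)$ is immediate: I would apply Proposition~\ref{prop1}(1) with the anchor taken to be $\mathfrak{a}=\Pi$. The two hypotheses of that proposition are already available---hypothesis (i) is precisely the Leibniz identity assumed in (2), and hypothesis (ii), the anchor relation $(\Pi(\alpha)f)\beta=[[\alpha,f\beta]]_I-f[[\alpha,\beta]]_I$ with $\Pi(\alpha)f=\mathcal{L}_{\Pi(\alpha)}f$, is exactly the content of the preceding Proposition exhibiting $\Pi$ as the anchor for $[[\,,\,]]_I$. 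Proposition~\ref{prop1}(1) then yields $\Pi[[\alpha,\beta]]_I=[\Pi\alpha,\Pi\beta]_c$ at once.

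The reverse implication $(1)\Rightarrow(2)$ carries the real content, and the cleanest route passes through the Nambu--Poisson condition. First I would show that, for $[[\,,\,]]_I$, the bracket-preserving identity $\Pi[[\alpha,\beta]]_I=[\Pi\alpha,\Pi\beta]_c$ is equivalent to $\Pi$ being a Nambu--Poisson tensor, by reducing it to the invariance condition $\mathcal{L}_{\Pi(df_1\wedge\cdots\wedge df_{p-1})}\Pi=0$ as was done for Hagiwara's bracket in the Proposition preceding~\ref{hagjacobi}. Once $\Pi$ is known to be Nambu--Poisson, the Leibniz identity for $[[\,,\,]]_I$ is supplied by the preceding Proposition (Ibanez et al., Thm.\ 3.7), which asserts that $(\bigwedge^{p-1}T^\ast M,[[\,,\,]]_I,\Pi)$ is a Leibniz algebroid in the Nambu--Poisson case. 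Chaining these gives $(1)\Rightarrow\Pi\text{ Nambu--Poisson}\Rightarrow(2)$.

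The main obstacle is this reverse direction, specifically the reduction of the bracket-preserving identity to the pointwise invariance $\mathcal{L}_{\Pi(df_1\wedge\cdots\wedge df_{p-1})}\Pi=0$. The favorable feature is that this reduction is tested on exact generators $\alpha=df_1\wedge\cdots\wedge df_{p-1}$, for which $d\alpha=0$; the extra term $(-1)^p(\Pi(d\alpha))\beta$ that distinguishes $[[\,,\,]]_I$ from Hagiwara's bracket therefore drops out on these generators, so that $[[\,,\,]]_I$ coincides with Hagiwara's bracket there and the reduction proceeds verbatim as in the Hagiwara computation. An alternative, self-contained route would be to expand the Leibniz jacobiator for $[[\,,\,]]_I$ directly, substitute the bracket-preserving identity to collapse the Lie-derivative terms, and apply Cartan's formula together with $\imath_{[X,Y]_c}=[\mathcal{L}_X,\imath_Y]_c$, mirroring the cancellation in the proof of Proposition~\ref{hagjacobi}; this avoids the cited theorem of Ibanez et al.\ at the cost of carrying the additional term through a longer calculation.
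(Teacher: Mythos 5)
Your proposal is correct, and the direction $(2)\Rightarrow(1)$ is exactly the paper's argument: invoke Proposition~\ref{prop1}(1), using that $\Pi$ is already known to be the anchor for $[[\,,\,]]_I$ (the paper's own proof states this direction with the labels apparently transposed, but the intended logic is the one you give). For $(1)\Rightarrow(2)$ you diverge from the paper. The paper's route is to reopen the proof of Theorem~3.7 of Iba\~nez et al.\ and observe that the only input it needs is the bracket-preserving condition (their Prop.~3.3), so that assuming (1) the same proof delivers the Leibniz identity; this is short but requires trusting an inspection of the internals of the cited proof. Your route instead uses Theorem~3.7 only as a black box (Nambu--Poisson $\Rightarrow$ Leibniz identity) and supplies the missing link by showing that the bracket-preserving condition for $[[\,,\,]]_I$ forces $\Pi$ to be Nambu--Poisson: you test (1) on exact generators $\alpha=df_1\wedge\cdots\wedge df_{p-1}$, where $d\alpha=0$ kills the term $(-1)^p(\Pi(d\alpha))\beta$ and $[[\,,\,]]_I$ coincides with Hagiwara's bracket, so the reduction to $\mathcal{L}_{\Pi(df_1\wedge\cdots\wedge df_{p-1})}\Pi=0$ from the Hagiwara section applies verbatim. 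This is sound (you only need the restriction of (1) to such $\alpha$ with arbitrary $\beta$, which is weaker than the full hypothesis), and it buys a cleaner logical dependence on the literature at the cost of an extra lemma; it also makes explicit the intermediate fact, parallel to the Hagiwara case, that bracket preservation for $[[\,,\,]]_I$ is itself equivalent to the Nambu--Poisson condition, which the paper does not state for this bracket.
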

\begin{proof}
Note that the proof of Leibniz identity in Thm. 3.7, Ibanez. et. al. uses the bracket preserving conditions on $\Pi$ (Prop. 3.3 Ibanez et. al). Conversely our Prop. (\ref{prop1}) shows that $\mathit{1.} \implies \mathit{2.}$.
\end{proof}

\section{A new bracket on $(p-1)$ forms}

\begin{proposition} \label{diffbrack}
Let $[\ , \ ]$ and $(\ ,\ )$ be two Leibniz brackets on the same vector bundle. Let
\begin{equation}
\{A,B\} := [A,B] - (A,B), \qquad \forall A, B
\end{equation}
then $\{A,B\}$ is a  Leibniz bracket.
\end{proposition}
\begin{proof}
\textbf{Linearity} \\
\begin{eqnarray*}
\{A_1+A_2,B\} & = & [A_1+A_2,B] - (A_1+A_2,B)  \\
& = &  [A_1,B] + [A_2,B] - (A_1,B) - (A_2,B) \\
& = & [A_1,B]  - (A_1,B) + [A_2,B] - (A_2,B) \\
& = &  \{A_1,B\} + \{A_2,B\}  \qed
\end{eqnarray*}

\begin{eqnarray*}
\{c A,B\} & = & [c A, B] - (c A, B)  \\
& = &  c \ [A,B] - c \ (A,B) \\
& = & c \ \{A,B\}  \qed
\end{eqnarray*}

Similarly linearity in the second entry of the bracket follows.\\

\textbf{Leibniz identity} : Now $\{A,\{B,C\}\} = [A,[B,C]] - (A,(B,C))$, therefore  \\
\begin{eqnarray*}
& & \{A,\{B,C\}\} - \{\{A,B\},C\} - \{B,\{A,C\}\} \\
& = & [A,[B,C]] - (A,(B,C)) - [[A,B],C] + ((A,B),C) - [B,[A,C]] + (B,(A,C)) \\
& = & [A,[B,C]]  - [[A,B],C] - [B,[A,C]] - (A,(B,C)) + ((A,B),C)  + (B,(A,C))
\end{eqnarray*}

\textbf{Anchor Map} : For the difference bracket the anchor map is the difference of the two anchors.
\end{proof}

\begin{definition} \label{defdiffbrack}
Let $M$ be a manifold of dimension $n$ and $\Pi$ a Poisson $p$-vector. For $\alpha, \beta$ $(p-1)$-forms on $M$  define
\begin{equation} \label{newb}
\{\alpha,\beta\} = \imath_{\Pi(\beta)} d\alpha + (-1)^p (\Pi d\alpha)\beta.
\end{equation}
\end{definition}
\begin{remark}
Choosing the brackets $[\ , \ ] = [[\ , \ ]]_I$ and $(\ ,\ ) = [[\ , \ ]]$ in Prop. (\ref{diffbrack}), we obtain the bracket $\{ , \}$ of Def. \eqref{newb} as the difference bracket.
\end{remark}
\begin{proposition} \label{propnewb}
Consider the bracket as given in Def. (\ref{defdiffbrack}). We have
\begin{enumerate}
\item $\{\alpha,f\beta\} = f \{\alpha,\beta\} \quad \forall \quad (p-1)-forms \quad \alpha, \beta$, so that the zero map is the anchor for $\{ , \}$.

\item $\{ , \}$ is a Leibniz bracket and $(\bigwedge^{p-1} T^\ast M, [[ \ , \ ]]_I, \mathbf{0})$ is a Leibniz algebroid.
\end{enumerate}
\end{proposition}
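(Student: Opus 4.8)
The plan is to treat the two assertions in turn, both of which reduce to facts already established in the excerpt. For part (1) I would proceed by a direct computation, substituting $f\beta$ for the second argument in the defining formula of Definition \ref{defdiffbrack}. The two terms behave differently, but both are manifestly $\mathcal{C}^\infty(M)$-linear in the second slot: in the first term $\imath_{\Pi(f\beta)}d\alpha$ I would use that $\Pi$, being a vector-bundle homomorphism, is $\mathcal{C}^\infty(M)$-linear on sections, so $\Pi(f\beta)=f\,\Pi(\beta)$, together with the $\mathcal{C}^\infty(M)$-linearity of the interior product, $\imath_{fX}=f\,\imath_X$; in the second term $(-1)^p(\Pi d\alpha)\beta$ the factor $\Pi d\alpha$ is a function (the contraction of the $p$-form $d\alpha$ with $\Lambda$), so multiplication by $f$ merely commutes past it. Collecting the common factor $f$ yields $\{\alpha,f\beta\}=f\{\alpha,\beta\}$. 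Feeding this into the defining relation of the anchor (Assumption (2) of Section 2), $(\mathfrak{a}(\alpha)f)\beta=\{\alpha,f\beta\}-f\{\alpha,\beta\}=0$ for every $\beta$, and faithfulness of $\Gamma(E)$ forces $\mathfrak{a}(\alpha)f=0$; hence the anchor of $\{\,,\,\}$ is the zero map.

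For part (2), rather than verify the Leibniz identity by an explicit Cartan-calculus computation, I would invoke the structural result of Proposition \ref{diffbrack}. By the Remark following Definition \ref{defdiffbrack}, the new bracket is exactly the difference $\{\,,\,\}=[[\,,\,]]_I-[[\,,\,]]$ of the Ibanez bracket and the Hagiwara bracket. Each of these is a Leibniz bracket --- this is the content of the Leibniz-algebroid statements for $[[\,,\,]]_I$ and for $[[\,,\,]]$ (the Ibanez Leibniz-algebroid proposition and Proposition \ref{hagjacobi}) --- so Proposition \ref{diffbrack} immediately gives that their difference $\{\,,\,\}$ satisfies the Leibniz identity. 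This is the cleanest route and sidesteps any direct identity-chasing.

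Finally I would assemble the algebroid statement. The Leibniz rule $\{s_1,fs_2\}=f\{s_1,s_2\}+\mathbf{0}(s_1)(f)\,s_2$ is precisely part (1) with the zero anchor, and the bracket-preserving condition $\mathbf{0}[[s_1,s_2]]=[\mathbf{0},\mathbf{0}]_c$ holds trivially (and is in any case automatic by Proposition \ref{prop1}(1)). Thus $(\bigwedge^{p-1}T^\ast M,\{\,,\,\},\mathbf{0})$ is a Leibniz algebroid.

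I expect no genuine obstacle: the only point demanding care is the overloaded use of the symbol $\Pi$ --- acting on a $(p-1)$-form it produces a vector field, whereas on the $p$-form $d\alpha$ it produces a function --- so one must keep track of which contraction is meant when checking $\mathcal{C}^\infty(M)$-linearity in each term. As a cross-check, part (1) also follows from the remark in Proposition \ref{diffbrack} that the anchor of a difference bracket is the difference of the anchors: both $[[\,,\,]]_I$ and $[[\,,\,]]$ carry the same anchor $\Pi$ (acting by $\mathcal{L}_{\Pi(\alpha)}$ on functions), so the anchor of $\{\,,\,\}$ is $\Pi-\Pi=\mathbf{0}$, in agreement with the direct computation.
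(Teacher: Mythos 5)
For part (1) your computation is exactly the paper's: expand $\{\alpha,f\beta\}$ from Def.~\ref{defdiffbrack}, use $\Pi(f\beta)=f\Pi(\beta)$ and the fact that $\Pi(d\alpha)$ is a function, and pull out $f$. Your cross-check via ``the anchor of a difference bracket is the difference of the anchors'' is consistent with the closing remark in the paper's proof of Prop.~\ref{diffbrack}, and your care about the two different meanings of $\Pi$ is exactly the right point to watch.

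For part (2) you and the paper both exploit the decomposition $\{\,,\,\}=[[\,,\,]]_I-[[\,,\,]]$, but the justifications differ in a way worth noting. You invoke Prop.~\ref{diffbrack} together with the fact that each of the two brackets is individually Leibniz (which requires $\Pi$ to be Nambu--Poisson). The paper instead argues that the Leibniz obstruction of the difference is the \emph{difference of the two Jacobi anomalies}, and that these anomalies are the \emph{same expression} $\Pi(d(\Pi(d\alpha)\wedge\beta))+(-1)^p\mathcal{L}_{\Pi(\beta)}(\Pi(d\alpha))$ (citing Hagiwara), hence cancel; in principle that version does not need either bracket to satisfy the Leibniz identity on its own. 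However, both routes share the same unaddressed step: if $\{A,B\}=[A,B]-(A,B)$, then $\{A,\{B,C\}\}=[A,[B,C]]-[A,(B,C)]-(A,[B,C])+(A,(B,C))$, so the Leibniz obstruction of the difference bracket is the sum of the two individual anomalies \emph{plus cross terms} such as $[A,(B,C)]$ and $(A,[B,C])$. The paper's proof of Prop.~\ref{diffbrack} writes $\{A,\{B,C\}\}=[A,[B,C]]-(A,(B,C))$, silently discarding these cross terms, so Prop.~\ref{diffbrack} as stated is not established in general and your appeal to it inherits that gap. To make part (2) airtight you would need either to verify that the mixed terms cancel for these two particular brackets, or to compute the Leibniz obstruction of $\{\,,\,\}$ directly from Eq.~(\ref{newb}) by Cartan calculus. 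The same caveat applies to the paper's own argument, so your proposal is faithful to the paper's level of rigour, but not more rigorous than it.
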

\begin{proof}
\begin{enumerate}
\item
$\{\alpha,f\beta\}  =  \imath_{\Pi(f\beta)} d\alpha + (-1)^p (\Pi d\alpha)f\beta
 =  f \imath_{\Pi(\beta)} d\alpha + f (-1)^p (\Pi d\alpha)\beta = f \{\alpha,\beta\}$

\item Indeed the zero map preserves brackets. To verify the Leibniz identity for $\{ , \}$ we write it as the difference of two brackets $\{\alpha,\beta\}  =  (\mathcal{L}_{\Pi(\alpha)} \beta + (-1)^p (\Pi d\alpha)\beta) - (\mathcal{L}_{\Pi(\alpha)}\beta - \imath_{\Pi(\beta)} d\alpha)$. Thus the obstruction to the Leibniz identity for $\{ , \}$ is the difference of the Jacobi anomalies for the two brackets on RHS. By the preceding sections, these are the same, namely $\Pi(d(\Pi (d\alpha)\wedge \beta)) + (-1)^p (\mathcal{L}_{\Pi(\beta)}(\Pi(d\alpha)))$ (See Hagiwara proof of Thm 2.7 \cite{Hagiwara}. They correspond to the Nambu-Poisson tensor condition). Hence the difference is zero.
\end{enumerate}
\end{proof}

\begin{remark}
Condition (i) of Prop (\ref{propnewb}) is the zero anchor condition; it holds for other combinations of signs in the definition of the bracket Eq. (\ref{newb}). But $\{,\}$ gives a Leibniz algebroid.
\end{remark}

\section{Courant Brackets}

We are in a position to answer a question of K. Uchino \cite{Uchino}: Is the bracket preserving condition $\rho[[x,y]]_c = [\rho(x),\rho(y)]_c \quad x,y \ \in \ \Gamma(E)$ redundant in the definition of Courant algebroid ?
\begin{definition} \label{cour-def}
(Courant Algebroid \cite{Uchino,revgrab}) Let $E \rightarrow M$ be a vector bundle with a nondegenerate symmetric bilinear form $(,)$ and a skew-symmetric bracket $\rho[[x,y]]_c$ on $\Gamma(E)$and a bundle map $\rho: E \rightarrow TM$ such that
\end{definition}
\begin{enumerate}

\item $[[\ [[x,y]]_c,z]]_c + [[\ [[y,z]]_c,x]]_c + [[\ [[z,x]]_c,y]]_c   = DT(x,y,z) \quad \forall \ x,y,z \ \in \Gamma(E)$ \\

where $T(x,y,z) = \frac{1}{3}\left( ([[x,y]]_c,z) + ([[y,z]]_c,x) + ([[z,x]]_c,y) \right)$ and $D: \mathcal{C}^\infty(M) \rightarrow \Gamma(E)$ is $\frac{1}{2} \beta^{-1} \rho^\ast d$, $\beta$ being the isomorphism between $E$ and $E^\ast$ provided by the bilinear form $(,)$.

\item $\rho[[x,y]]_c = [\rho(x),\rho(y)]_c \quad x,y \quad \in \quad \Gamma(E)$.

\item $[[x,f y]]_c = f [[x,y]]_c + (\rho(x) f)y - (x,y)Df  \quad x,y \ \in \ \Gamma(E), \ \forall \ f \in \mathcal{C}^\infty(M)$.

\item $\rho \circ D = 0$.

\item $\rho(x) (y,z) = \big([[x,y]]_c + D(x,y)z \big) + \big(y,[[x,z]]_c + D(x,z) \big) \quad \forall \ x,y \ \in \ \Gamma(E)$.

\end{enumerate}

It was observed \cite{revgrab} that the Courant bracket is the antisymmetrization
\begin{equation} \label{cdtodb1}
[[x , y]]_c = \frac{1}{2}(x\circ y - y \circ x)
\end{equation}
for
\begin{equation} \label{cdtodb2}
x \circ y = [[x , y]]_c + \frac{1}{2}D(x,y) \quad (Dorfman \ Bracket)
\end{equation}
\begin{proposition}
$(E, \circ, \rho)$ is a Leibniz algebroid for the Dorfman bracket with anchor $\rho$
\end{proposition}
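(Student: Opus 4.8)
The plan is to verify the three defining requirements of a Leibniz algebroid (Definition \ref{leibalg}) for the triple $(E,\circ,\rho)$, while exploiting Proposition \ref{prop1} to obtain the bracket-preserving axiom for free — which is precisely the content relevant to Uchino's question. Throughout I may use the Dorfman relation $x\circ y = [[x,y]]_c + \tfrac{1}{2}D(x,y)$ together with all the Courant axioms (1)--(5) listed after Definition \ref{cour-def}.

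First I would establish the module (Leibniz) rule $x\circ(fy) = f(x\circ y) + (\rho(x)f)y$. Expanding the definition gives $x\circ(fy) = [[x,fy]]_c + \tfrac{1}{2}D((x,fy))$. Axiom (3) rewrites $[[x,fy]]_c = f[[x,y]]_c + (\rho(x)f)y - (x,y)Df$; the $\mathcal{C}^\infty(M)$-linearity of the pairing gives $(x,fy) = f(x,y)$; and the derivation property of $D$ — immediate from $D = \tfrac{1}{2}\beta^{-1}\rho^\ast d$, since $\beta^{-1}\rho^\ast$ is $\mathcal{C}^\infty(M)$-linear and $d$ is a derivation — expands $D(f(x,y)) = f\,D((x,y)) + (x,y)Df$. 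Collecting contributions, the $Df$-proportional terms are arranged to cancel, leaving $f(x\circ y) + (\rho(x)f)y$. This is exactly hypothesis (2) of Section 2 with anchor $\mathfrak{a} = \rho$, namely $(\rho(x)f)y = x\circ(fy) - f(x\circ y)$.

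Next I would prove the Leibniz identity $x\circ(y\circ z) = (x\circ y)\circ z + y\circ(x\circ z)$, which is the substantive step. Writing the Dorfman Jacobiator in terms of the skew bracket $[[\,\cdot\,,\,\cdot\,]]_c$, the purely Courant part produces the Jacobi anomaly, which by axiom (1) equals $DT(x,y,z)$, while the remaining terms gather the $D$-corrections coming from the $\tfrac{1}{2}D(x,y)$ summands. One then shows these $D$-corrections assemble to cancel the $DT$ term, using axiom (4) ($\rho\circ D = 0$) to annihilate spurious anchor contributions and the metric-compatibility axiom (5) to match the $D$-exact pieces. This is the classical Courant-to-Loday cancellation (Roytenberg, Uchino), and I expect it to be the main obstacle, since it is here that the full force of the nondegenerate pairing and of all five axioms is needed.

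Finally I would invoke Proposition \ref{prop1}(1). Having verified hypothesis (1) (the Leibniz identity) and hypothesis (2) (the module rule, with anchor $\rho$) of Section 2, Proposition \ref{prop1}(1) yields $\rho(x\circ y) = [\rho(x),\rho(y)]_c$ with no further computation. This simultaneously supplies the bracket-preserving axiom of Definition \ref{leibalg} and settles Uchino's query: that axiom is redundant, being forced by the Leibniz identity together with the mere existence of the anchor.
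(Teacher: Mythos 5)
The paper does not actually prove this proposition: it simply defers to Baraglia, and the real work it extracts from the statement is the corollary that follows (where Proposition \ref{prop1} converts ``Dorfman is a Leibniz algebroid'' into the bracket-preserving property of the Courant bracket). Your proposal therefore attempts strictly more than the paper does, and the route you choose --- direct verification of the module rule from axiom (3) plus the derivation property of $D$, then the Leibniz identity from the Courant axioms, then Proposition \ref{prop1} for the anchor condition --- is the standard Roytenberg--Uchino argument and is the right skeleton. Your final paragraph also matches exactly how the paper itself uses the proposition afterwards.

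Two concrete reservations. First, the Leibniz identity is the entire content of the proposition, and ``one then shows these $D$-corrections assemble to cancel the $DT$ term, using axioms (4) and (5)'' is a statement of intent, not a proof: this cancellation is a genuinely long computation in which $T(x,y,z)$ must be differentiated and axiom (5) applied repeatedly to rewrite $\rho(x)(y,z)$ terms; as written the step you yourself identify as the main obstacle is left undone. Second, your claim that in the module rule ``the $Df$-proportional terms are arranged to cancel'' fails with the paper's literal normalizations: from $x\circ(fy)=[[x,fy]]_c+\tfrac{1}{2}D((x,fy))$, axiom (3), and $D(f(x,y))=fD((x,y))+(x,y)Df$, one obtains
\begin{equation*}
x\circ(fy)=f(x\circ y)+(\rho(x)f)y-\tfrac{1}{2}(x,y)Df,
\end{equation*}
with a residual term. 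The cancellation you assert requires the convention $x\circ y=[[x,y]]_c+D((x,y))$ with $\bigl(Df,x\bigr)=\tfrac{1}{2}\rho(x)f$ (or an equivalent rescaling); the paper's equation \eqref{cdtodb2} carries an extra $\tfrac{1}{2}$ that is inconsistent with its own axiom (3). You should either fix the normalization explicitly or note that the residual $D$-term is annihilated for the purposes of Proposition \ref{prop1} by $\rho\circ D=0$ --- but as stated the triple with the paper's $\circ$ does not satisfy Definition \ref{leibalg}(2) on the nose.
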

For Proof see Baraglia \cite{Baraglia}.

\begin{corollary}
$\rho[[x,y]]_c = [\rho(x),\rho(y)]_c$ i.e. the condition 2. in Def. \ref{cour-def} follows from equations \eqref{cdtodb1} and \eqref{cdtodb2}
\end{corollary}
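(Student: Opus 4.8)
The plan is to leverage the Leibniz algebroid structure for the Dorfman bracket, which the preceding proposition establishes as $(E, \circ, \rho)$ with anchor $\rho$. Once that structure is in hand, the bracket-preserving condition for the Dorfman bracket, namely $\rho(x \circ y) = [\rho(x), \rho(y)]_c$, is no longer an axiom to be verified but a \emph{consequence} of the Leibniz identity by Proposition \ref{prop1}(1). Concretely, assumptions (1) and (2) of Section 2 are satisfied by the Dorfman bracket: the Leibniz identity holds (since $\circ$ is a Leibniz algebroid bracket), and the anchor $\rho$ plays the role of $\mathfrak{a}$, with the defining relation $(\mathfrak{a}(X)f)Y = [X, fY] - f[X,Y]$ being exactly the derivation property $x \circ (f y) = f (x \circ y) + (\rho(x)f) y$ of the Dorfman bracket. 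Hence Proposition \ref{prop1}(1) applies verbatim and gives $\rho(x \circ y) = [\rho(x), \rho(y)]_c$ for free.

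The remaining task is to descend from the Dorfman bracket $\circ$ to the Courant bracket $[[\,,\,]]_c$ using the antisymmetrization relation \eqref{cdtodb1}. First I would write $\rho([[x,y]]_c)$ by substituting \eqref{cdtodb1}, obtaining $\rho\bigl(\tfrac12(x \circ y - y \circ x)\bigr)$. Using linearity of $\rho$ and the already-established identity $\rho(x \circ y) = [\rho(x),\rho(y)]_c$, this becomes $\tfrac12\bigl([\rho(x),\rho(y)]_c - [\rho(y),\rho(x)]_c\bigr)$. The commutator in $\mathcal{T}$ is antisymmetric by its very definition, so $[\rho(y),\rho(x)]_c = -[\rho(x),\rho(y)]_c$, and the two terms combine to give exactly $[\rho(x),\rho(y)]_c$. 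This is precisely condition (2) of Definition \ref{cour-def}, so it is derived rather than assumed.

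The one point that requires care, and which I expect to be the main obstacle, is verifying that $\rho$ annihilates the correction term $\tfrac12 D(x,y)$ that distinguishes the Dorfman bracket from the Courant bracket in \eqref{cdtodb2}. That is, when one expands $x \circ y - y \circ x$ directly, the term $\tfrac12 D(x,y) - \tfrac12 D(y,x)$ appears, and for the antisymmetrization argument to close cleanly one needs $\rho \circ D = 0$. Fortunately this is exactly condition (4) of Definition \ref{cour-def}, so the obstruction dissolves: applying $\rho$ to the correction terms yields zero by $\rho \circ D = 0$, leaving only the genuine bracket contributions. I would therefore structure the proof so that the appeal to $\rho \circ D = 0$ is made explicit at the step where the $D$-terms are discarded, emphasizing that condition (2) is redundant precisely because it follows from conditions (3), (4), the Leibniz identity for $\circ$, and Proposition \ref{prop1}.
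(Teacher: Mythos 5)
Your proposal is correct and takes essentially the same route as the paper: both apply Proposition \ref{prop1}(1) to the Dorfman bracket (a Leibniz bracket with anchor $\rho$, by the preceding proposition) to obtain $\rho(x\circ y)=[\rho(x),\rho(y)]_c$, and then antisymmetrize via \eqref{cdtodb1} together with the antisymmetry of the commutator. Your closing concern about the correction term is moot---since $(\,,\,)$ is symmetric, $D(x,y)=D(y,x)$, so the $D$-terms already cancel in $x\circ y-y\circ x$ before $\rho$ is applied---hence the appeal to $\rho\circ D=0$ is unnecessary, though harmless.
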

\begin{proof}
By our prop 2.1, $\rho(x \circ y) = [\rho(x), \rho(y)]_c$. Therefore
\begin{eqnarray} \nonumber
\frac{1}{2} (\rho(x \circ y - y \circ x)) & = & \frac{1}{2} (\rho(x \circ y) - \rho(y \circ x)) \\ \nonumber
& = & \frac{1}{2} ([\rho(x),\rho(y)]_c - [\rho(y) \circ \rho(x)]_c) \\
& = & \frac{1}{2} ([\rho(x),\rho(y)]_c + [\rho(x) \circ \rho(y)]_c) = [\rho(x),\rho(y)]_c
\end{eqnarray}
since the bracket on tangent vectors is anti-symmetric.
\end{proof}

%

%
%


\section{Koszul/Fuchssteiner bracket on 1-forms}

The motivating example for brackets on higher degree forms is the Koszul bracket on 1-forms: given a bi-vector $\Pi$ define \cite{Kosmann}
\begin{definition}
$\{\alpha,\beta\} = \mathcal{L}_{\Pi(\alpha)} \beta = \mathcal{L}_{\Pi(\beta)} \alpha + d(\Pi(\alpha,\beta))$
\end{definition}
This bracket has been characterised by the following two conditions
\begin{enumerate}

\item $\{\alpha,f \beta\} = f \{\alpha,\beta\} + (\mathcal{L}_{\Pi(\alpha)} f) \beta$

\item $\{df, dg\} = - d\{f,g\}$

\end{enumerate}
where $\{f,g\}$ is the Poisson bracket corresponding to $\Pi$.

\begin{theorem} \label{kszthm} The following conditions are equivalent \cite{Kosmann}
\begin{enumerate}
\item The Jacobi identity holds for $\{,\}$.

\item $\Pi\{\alpha,\beta\} = \{\Pi\alpha,\Pi\beta\}$.

\item the Schouten bracket $[\Pi,\Pi]_S = 0$.
\end{enumerate}
\end{theorem}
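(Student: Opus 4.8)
The plan is to close the cycle of implications $(1) \Rightarrow (2) \Rightarrow (3) \Rightarrow (1)$, using Proposition \ref{prop1} for the first arrow, an elementary reduction to exact forms for the second, and the classical characterization of Poisson bivectors for the last. The first arrow is essentially free given our earlier work: the characterizing condition $\{\alpha, f\beta\} = f\{\alpha,\beta\} + (\mathcal{L}_{\Pi(\alpha)}f)\beta$ matches assumption (ii) of Section 2 with $\mathfrak{a}(\alpha)f = \mathcal{L}_{\Pi(\alpha)}f = \Pi(\alpha)(f)$, so $\alpha \mapsto \Pi(\alpha)$ is precisely the anchor. Hence, assuming the Jacobi identity $(1)$, Proposition \ref{prop1}(1) delivers $\Pi\{\alpha,\beta\} = [\Pi\alpha,\Pi\beta]_c$, which is condition $(2)$ (the right-hand side being the commutator of the vector fields $\Pi\alpha, \Pi\beta$, written $\{\Pi\alpha,\Pi\beta\}$ in the statement).

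For $(2) \Rightarrow (3)$ I would test the bracket-preserving condition on exact forms. Writing $X_h = \Pi(dh)$ for the Hamiltonian vector field and $\{f,g\} = \Pi(df,dg)$ for the Poisson bracket on functions, the characterizing condition $\{df,dg\} = -d\{f,g\}$ turns $(2)$ into $[X_f,X_g]_c = -X_{\{f,g\}}$. This says that $f \mapsto X_f$ is, up to sign, a Lie-algebra homomorphism, equivalently that the Poisson Jacobiator $\{f,\{g,h\}\} + \{g,\{h,f\}\} + \{h,\{f,g\}\}$ vanishes. The classical formula expressing the Schouten bracket as a nonzero multiple of this Jacobiator, $[\Pi,\Pi]_S(df,dg,dh) \propto \{f,\{g,h\}\} + \{g,\{h,f\}\} + \{h,\{f,g\}\}$, then forces $[\Pi,\Pi]_S$ to annihilate every triple of exact forms. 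Since differentials span the cotangent space at each point and $[\Pi,\Pi]_S$ is a tensor, this gives $[\Pi,\Pi]_S = 0$, which is $(3)$.

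Finally, $(3) \Rightarrow (1)$ is the substantive step: when $\Pi$ is a Poisson bivector one must verify that the Koszul bracket genuinely satisfies the Jacobi identity. This is the classical computation of Kosmann-Schwarzbach \cite{Kosmann}, in which the Jacobiator of $\{,\}$ is shown to be a fixed expression in $[\Pi,\Pi]_S$, so that vanishing of the Schouten bracket annihilates it. I expect this to be the main obstacle, since it is the only arrow not supplied by Proposition \ref{prop1} or by the elementary exact-form reduction; in keeping with the treatment of the Hagiwara and Ibanez brackets above, I would invoke the cited result rather than reproduce the lengthy Cartan-calculus manipulation. Closing the cycle then establishes the equivalence of all three conditions.
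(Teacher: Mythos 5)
The paper does not actually prove Theorem \ref{kszthm}: it is quoted verbatim from Kosmann-Schwarzbach and Magri \cite{Kosmann}, with no argument supplied beyond the citation (the subsequent Remark only observes that the equivalence of (1) and (2) is the analogue of Cor.~\ref{corIb} and Cor.~\ref{cirHag}). Your proposal therefore does strictly more than the paper: it organizes the equivalence as a cycle $(1)\Rightarrow(2)\Rightarrow(3)\Rightarrow(1)$ in which two of the three arrows are genuinely derived. Your $(1)\Rightarrow(2)$ is exactly the intended use of Proposition~\ref{prop1} --- the characterizing identity $\{\alpha,f\beta\}=f\{\alpha,\beta\}+(\mathcal{L}_{\Pi(\alpha)}f)\beta$ exhibits $\Pi$ as the anchor, and the skew-symmetry of the Koszul bracket makes the cyclic Jacobi identity equivalent to the Leibniz form assumed in Section~2 --- and this is precisely the spirit in which the paper's Remark invokes the theorem. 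Your $(2)\Rightarrow(3)$ via restriction to exact forms, the identity $\{df,dg\}=-d\{f,g\}$, and the tensoriality of $[\Pi,\Pi]_S$ is correct and standard; the only soft spot is the phrase ``up to sign, a Lie-algebra homomorphism,'' since only one sign convention for $f\mapsto X_f$ makes $[X_f,X_g]_c=-X_{\{f,g\}}$ literally equivalent to the Jacobi identity of the Poisson bracket, so you should fix conventions before asserting the equivalence with the vanishing of the Jacobiator. The arrow $(3)\Rightarrow(1)$ remains an appeal to \cite{Kosmann}, so the theorem is still not self-contained, but that is consistent with how the paper treats the analogous hard directions for the Iba\~{n}ez and Hagiwara brackets. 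Net effect: your route buys an honest derivation of the two implications that the paper's own machinery can reach, at the cost of a sign bookkeeping step the paper never has to confront because it proves nothing here.
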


\begin{remark}
For the brackets of Ibanez et. al. and of Hagiwara the equivalence of $\mathit{1.}$ and $\mathit{2.}$ of Thm. (\ref{kszthm}) holds as shown in Cor. (\ref{corIb}) and Cor. (\ref{cirHag}) respectively. Regarding $\mathit{3.}$ the vanishing of the Schouten bracket holds for even order Nambu-Poisson vectors $\Pi$. In general $\Pi$ need not be a Nambu-Poisson tensor even if $\mathit{3.}$ holds (Cor. \textit{III.8} of Ref. \cite{Ibanez2}).
\end{remark}

\section{Conclusion}

We have shown that the bracket-preserving property of the anhcor follows from the Jaobi identity. This redundancy was proved for Lie algebroids in a more laborious way; for Leibniz algebroids, it has been stated, but our simple proof seems worth presenting. The equivalence of the Jacobi identity and the bracket preserving property is established for the two brackets on $(p-1)$ forms on Nambu-Poisson manifolds. The redundancy of the bracket-preserving condition was conjectured by K. Uchino \cite{Uchino}: we have shown it here.

An outcome of our work is to further explore the role of the Leibniz bracket in mechanics. Our motivating example come from the Fuchssteiner bracket. In an early work, Fuchssteiner defined a bracket that extended the usual Poisson bracket of function to that of a bracket of closed one forms defined on a Poisson manifold. Using this bracket he was able to show that they form the symmetry algebra of several nonlinear evolution equations \cite{Fuchssteiner}. A natural question to ask now is, what is the role of the Leibniz brackets that are defined for $(p-1)$ forms ?

\end{document}